\documentclass[12pt,a4paper]{article}

\usepackage[utf8]{inputenc}
\usepackage[T1]{fontenc}
\usepackage{amsmath}
\usepackage{amssymb}
\usepackage{amsthm}
\usepackage{mathtools}
\usepackage{bm}
\usepackage{xcolor}
\usepackage{mathrsfs}
\usepackage{hyperref}

\usepackage[margin=1in]{geometry}
\usepackage{graphicx}
\usepackage{booktabs}
\usepackage{caption}
\usepackage{subcaption}
\usepackage{microtype}

\usepackage{algorithm}
\usepackage{algpseudocode}

\usepackage[
    backend=biber,
    style=numeric,
    sorting=nyt,
    giveninits=true,
    maxbibnames=99,
    doi=false,
    isbn=false,
    url=false,
    eprint=false
]{biblatex}
\DeclareDelimFormat[bib]{finalnamedelim}{%
  \addspace\bibstring{and}\space}
\DeclareFieldFormat[book]{title}{#1}
\DeclareBibliographyDriver{book}{%
  \usebibmacro{bibindex}%
  \usebibmacro{begentry}%
  \printnames{author}%
  \addcomma\space
  \printfield{title}%
  \addcomma\space
  \printlist{publisher}%
  \addcomma\space
  \printlist{location}%
  \addcomma\space
  \printfield{year}%
  \finentry
}
\DeclareFieldFormat[inproceedings]{title}{#1}
\DeclareFieldFormat[inproceedings]{booktitle}{#1}
\DeclareBibliographyDriver{inproceedings}{%
  \usebibmacro{bibindex}%
  \usebibmacro{begentry}%
  \printnames{author}%
  \addcomma\space
  \printfield{title}%
  \addcomma\space
  \printfield{booktitle}%
  \addcomma\space
  \printfield{year}%
  \finentry
}

\DeclareFieldFormat[article]{journaltitle}{#1}
\DeclareFieldFormat[article]{volume}{#1}
\DeclareFieldFormat[article]{title}{#1}
\DeclareFieldFormat[article]{pages}{#1}

\AtEveryBibitem{%
  \clearfield{number}%
  \clearfield{month}%
  \clearfield{day}%
}

\renewbibmacro{in:}{}

\DeclareBibliographyDriver{article}{%
  \usebibmacro{bibindex}%
  \usebibmacro{begentry}%
  \printnames{author}%
  \addcomma\space
  \printfield{title}%
  \addcomma\space
  \printfield{journaltitle}%
  \addcomma\space
  \printfield{volume}%
  \space
  \mkbibparens{\printfield{year}}%
  \addcomma\space
  \printfield{pages}%
  \finentry
}

\usepackage{fontawesome5}

\usepackage{hyperref}

\newcommand{\mailicon}[1]{%
  \href{mailto:#1}{\textsuperscript{\faEnvelope}}%
}

\newtheorem{theorem}{Theorem}[section]
\newtheorem{lemma}[theorem]{Lemma}

\newtheorem{corollary}[theorem]{Corollary}
\newtheorem{proposition}[theorem]{Proposition}

\newtheorem{example}{Example}

\begin{document}

\title{On the structure of constacyclic codes over finite chain rings\footnotetext{2020 Mathematics Subject Classication. Primary: 11T71,94B15,94B65; Secondary: 94B05.\\
Keywords: Finite chain rings, constacyclic codes, MHDR codes, MDS codes.\\
*Corresponding author: Ridhima Thakral.}}
\author{
Vaishali Singh$^{\mailicon{vaishali.phd24maths@pec.edu.in}1}$,
Sucheta Dutt$^{\mailicon{sucheta@pec.edu.in}2}$ and
Ridhima Thakral$^{\mailicon{ridhimathakral@pec.edu.in}*,3}$\\[6pt]
$^{1,2,3}$Department of Mathematics,\\
Punjab Engineering College (Deemed to be University),\\
Sector 12, Chandigarh 160012, India
}
\date{}
\maketitle

\begin{abstract}
In the present paper, we provide an explicit construction for generators of a $\lambda$-constacyclic code $\mathcal{C}$ of arbitrary length $\ell$ over a finite chain ring(FCR) $\mathcal{R}$ in terms of certain minimum degree polynomials of the ring $\mathcal{R}[x]/ \langle x^{\ell}-\lambda \rangle$. Moreover, the proposed construction achieves the minimum possible number of generators. We prove certain properties of this set of generators, using which we obtain a minimal spanning set of $\mathcal{C}$. We also obtain that the rank of $\mathcal{C}$ is $\ell-n_0$, where $n_0$ is the degree of the minimal degree polynomial in $\mathcal{C}$. Finally, we derive necessary and sufficient conditions under which an arbitrary length $\lambda$-constacyclic code $\mathcal{C}$ over $\mathcal{R}$ is Maximum Hamming Distance with respect to Rank(MHDR) as well as Maximum Distance Separable(MDS) in terms of a torsion code of $\mathcal{C}$ over the residue field $\mathbb{F}_q$ of $\mathcal{R}$. We further determine the exact values for $n_0$ for which $\mathcal{C}$ over $\mathcal{R}$ is MHDR.
\end{abstract}

\section{Introduction}
Reliable transmission of information across noisy channels has long demanded mathematical tools and error-correcting codes occupy a central place among them.  Within this domain, the class of cyclic and constacyclic codes received considerable attention due to their close connection between the codewords and ideals of the polynomial ring.

Early studies on cyclic and negacyclic codes over finite fields were established in the 1960s by Berlekamp~\cite{mcdonald1974finite}. Subsequently, in 1991, Castagnoli et al. \cite{castagnoli1991repeated} and Lint \cite{van1991repeated} investigated repeated root cyclic codes. In 1994, it was observed by Hammons et al. \cite{calderbank1994z4} that certain nonlinear binary codes derived from linear codes over $\mathbb{Z}_4$ using Gray maps.

An advancement in coding theory was achieved with the introduction of finite chain rings(FCR) as an underlying algebraic structure. In this direction, the theory of linear and cyclic codes over FCR was developed by Norton and S\u{a}l\u{a}gean \parencite{norton2000structure,norton2002hamming} in 2000. They also found the Hamming distances of such codes. In 2004, Dinh and L\'{o}pez-Permouth \cite{dinh2004cyclic} studied the structural properties of cyclic and negacyclic codes over a FCR, particularly for those lengths that do not divide the characteristic of the residue field of the FCR. Further studies in 2005 by Dinh \cite{dinh2005negacyclic} focused on negacyclic codes defined over Galois rings with length $2^s$. In 2006, it was established by S\u{a}l\u{a}gean \cite{sualuagean2006repeated} that repeated root cyclic and negacyclic codes defined over FCR are generally not principally generated.  In 2008, Dinh studied the structural properties of prime power length negacyclic codes over finite fields~\cite{dinh2008linear}. In 2009, Dinh \cite{dinh2009constacyclic} investigated length $2^s$ constacyclic codes over Galois extension rings of $F_2 + uF_2$. Continuing this line of research, Dinh \cite{dinh2010constacyclic} in 2010, analyzed the structural characterisation and Hamming distances of $(\alpha + u\beta)$-constacyclic codes of length $p^s$ over the ring $F_{p^m} + uF_{p^m}$. In 2012, Dinh\cite{dinh2012repeated} further explored repeated-root constacyclic codes by examining the length $2p^s$ over finite fields. During the same year, Chen et al. \cite{chen2012constacyclic} provided an explicit characterisation of generator polynomials associated with constacyclic codes of length $l^tp^s$ over a finite field, where $p$ denotes the field characteristic and $l$ is a prime different from $p$. In 2013, Cao \cite{cao2013constacyclic} examined arbitrary length $(1 + w\gamma)$-constacyclic codes over FCR. Also in that year, Dinh \cite{dinh2013structure} investigated length $3p^s$ repeated-root constacyclic codes over finite fields, focusing on their structural and dual properties. Further developments were made in 2014 by Chen et al. \cite{chen2014repeated}, who characterised repeated-root constacyclic codes of length $\ell p^s$ over finite fields. In 2015, Raka \cite{raka2015class} determined all $\mu$-constacyclic codes of length $\ell^n$. She also characterised repeated-root $\lambda$-constacyclic codes of length $\ell^n p^s$ over finite fields. In 2016, Chen et al. \cite{chen2016constacyclic} established the structure of $\lambda$-constacyclic codes of length $2p^s$ over the ring $F_{p^m} + uF_{p^m}$, where $\lambda = \alpha + u\beta$ is a non-square unit and $\alpha, \beta \in F_{p^m}$ are non-zero elements. In 2017, Dinh et al. \cite{dinh2017repeated} analysed prime power length repeated-root constacyclic codes over finite commutative chain rings. In 2018, Cao et al. \cite{cao2018constacyclic} described the structure of $\lambda$-constacyclic codes of length $np^s$ over the ring $F_{p^m}[u]/\langle u^2 \rangle$, where the length of the code and characteristic of the residue field are coprime. In 2019, Sharma and Sidana \cite{sharma2018structure} examined the structural characteristic and distance distribution of repeated-root constacyclic codes with prime power length over finite commutative chain rings.

Monika et al. \cite{monika2021cyclic} derived the structure of cyclic codes of arbitrary length over a FCR in 2021. Later in 2024, Dalal et al. \cite{dalal2024mds} established a unique generating set for cyclic codes of arbitrary length over a FCR, along with necessary and sufficient conditions for these codes to be Maximum Hamming Distance with respect to Rank(MHDR) and Maximum Distance Separable(MDS). Extending her work, the present paper investigates constacyclic codes of arbitrary length over FCR and obtains a minimal set of generators for such codes. We also obtain necessary and sufficient conditions for these constacyclic codes to be MHDR as well as MDS.
 
The rest of this paper is arranged as follows. The basic algebraic background and notation are introduced in Section~\ref{sec:prelim}. We provide a generating set for $\mathcal{C}$ in terms of certain minimum degree polynomials in the ring $\mathcal{R}[x]/ \langle x^{\ell}-\lambda \rangle$, where $\mathcal{C}$ is a $\lambda$-constacyclic code of arbitrary length $\ell$ over a FCR $\mathcal{R}$ in section~\ref{sec:structure}. In the same section, we prove certain properties of this set of generators, using which we obtain a minimal spanning set of $\mathcal{C}$. We obtain the rank of $\mathcal{C}$. Finally, we obtain necessary and sufficient conditions for an arbitrary length $\lambda$-constacyclic code $\mathcal{C}$ over $\mathcal{R}$ to be MHDR as well as to be MDS in terms of a torsion code of $\mathcal{C}$ over the residue field $\mathbb{F}_q$ of $\mathcal{R}$. We further determine values of $n_0$ for which an arbitrary length $\lambda$-constacyclic code is MHDR. Some examples are also included to illustrate the results. Finally, the paper is concluded in section~\ref{sec:conc}.

\section{Preliminaries}\label{sec:prelim}
Consider a finite commutative chain ring $\mathcal{R}$. Let the ideal generated by the element $a$ is denoted by $\langle a \rangle$. Assume that $\gamma \in \mathcal{R}$ and the ideal generated by $\gamma$ is the unique maximal ideal in $\mathcal{R}$. Thus, the collection of ideals of $\mathcal{R}$ constitutes a chain under the inclusion ordering, i.e., $\langle 0 \rangle = \langle \gamma^{\rho} \rangle \subset \langle \gamma^{\rho-1} \rangle \subset \cdots \subset \langle \gamma \rangle \subset \langle \gamma^{0} \rangle = \mathcal{R}$, where $\rho$ denotes the nilpotency index of $\gamma$. The factor ring $\mathcal{\overline{R}}=\mathcal{R}/\langle \gamma\rangle$ forms a finite field, say $\mathbb{F}_{q}$, the residue field of the ring $\mathcal{R}$. The ring homomorphism $\mathcal{R} \to \mathbb{F}_{q}$, defined by $a \mapsto \overline{a}$, extends naturally coefficientwise from $\mathcal{R}[x]$ onto $\mathbb{F}_{q}[x]$.
The proposition below holds for a FCR.

\begin{proposition}\label{prop:first}
\cite{dinh2017repeated}
Consider a finite commutative chain ring $\mathcal{R}$ having a maximal ideal $\langle \gamma \rangle$, and let the nilpotency index of $\gamma$ is $\rho$. Then:

\begin{enumerate}
\item[(a)] There exist positive integers $m \ge n$ and a prime $p$ such that $|\mathcal{R}| = p^m$, the residue field $\mathbb{F}_q = \mathcal{R}/\langle \gamma \rangle$ has size $|\mathbb{F}_q| = q= p^n$, and  $\mathcal{R}$ has characteristic equal to a power of $p$.

\item[(b)] There is a unit $\theta \in \mathcal{R}$ whose multiplicative order equals $q-1$. With the Teichm\"uller set $\mathbb{T}=\{0,1,\theta,\theta^2,\ldots,\theta^{q-2}\}$, each element $a \in \mathcal{R}$ has a unique $\gamma$-adic expansion of the form $a=a_0+\gamma a_1+\cdots+\gamma^{\rho-1}a_{\rho-1}$, where $a_i \in \mathbb{T}$ for every $0 \leq i \leq \rho-1$.

\item[(c)] For each integer $i$ with $0 \le i \le \rho$, we have $|\langle \gamma^i \rangle| = q^{\rho - i}$; in particular $|\mathcal{R}| =|\mathbb{F}_q|^{\rho} =q^{\rho}$, hence $m = n\rho$.
\end{enumerate}
\end{proposition}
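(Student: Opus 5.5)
We prove the three parts in the order (c), (a), (b), since the counting in (c) feeds directly into (a).

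\emph{Part (c).} The idea is that successive quotients in the chain of ideals are one-dimensional over $\mathbb{F}_q$. For each $0\le i\le \rho-1$, consider the map $\mathcal{R}\to \langle\gamma^i\rangle/\langle\gamma^{i+1}\rangle$ given by $a\mapsto \gamma^i a+\langle\gamma^{i+1}\rangle$.
\begin{itemize}
\item It is surjective and $\mathcal{R}$-linear.
\item Its kernel contains $\langle\gamma\rangle$.
\item The kernel is proper: otherwise $\gamma^i\in\langle\gamma^{i+1}\rangle$, so $\gamma^i=\gamma^{i+1}b$ and $\gamma^i(1-\gamma b)=0$. Since $1-\gamma b$ is a unit in a local ring, this forces $\gamma^i=0$, contradicting $i<\rho$.
\item As $\langle\gamma\rangle$ is maximal, the kernel is exactly $\langle\gamma\rangle$.
\end{itemize}
Hence $\langle\gamma^i\rangle/\langle\gamma^{i+1}\rangle\cong\mathbb{F}_q$. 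Telescoping down the chain then gives $|\langle\gamma^i\rangle|=q^{\rho-i}$, and in particular $|\mathcal{R}|=q^\rho$.

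\emph{Part (a).} The residue field $\mathbb{F}_q$ is finite, so $q=p^n$ for some prime $p$. By (c), $|\mathcal{R}|=p^{n\rho}$; set $m=n\rho\ge n$. The additive order of $1$ divides $|\mathcal{R}|$, so the characteristic of $\mathcal{R}$ is a power of $p$.

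\emph{Part (b).} This is the main obstacle: we must lift a generator of $\mathbb{F}_q^\*$ to an element of order exactly $q-1$. We use the standard Hensel / Teichmüller trick.
\begin{itemize}
\item Pick $u\in\mathcal{R}$ whose image generates $\mathbb{F}_q^\*$. Then $u$ is a unit, since it lies outside the maximal ideal.
\item The unit group has order $|\mathcal{R}|-|\langle\gamma\rangle|=q^{\rho-1}(q-1)$.
\item The kernel $1+\langle\gamma\rangle$ of reduction on units has order $q^{\rho-1}$, a power of $p$.
\item Set $\theta=u^{q^{\rho-1}}$. Since $\gcd(q^{\rho-1},q-1)=1$, $\bar\theta=\bar u^{q^{\rho-1}}$ still generates $\mathbb{F}_q^\*$, so the order of $\theta$ is a multiple of $q-1$.
\item Conversely, $\theta^{q-1}=u^{q^{\rho-1}(q-1)}=1$ by Lagrange. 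So $\theta$ has order exactly $q-1$.
\end{itemize}
It follows that the elements of $\mathbb{T}$ reduce bijectively onto $\mathbb{F}_q$.

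For existence of the $\gamma$-adic expansion, we induct. Let $a_0\in\mathbb{T}$ be the element with $\bar a_0=\bar a$. Then $a-a_0=\gamma a'$, and we repeat the construction on $a'$. After $\rho$ steps the remainder lies in $\langle\gamma^\rho\rangle=0$.

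For uniqueness, we count: there are $q^\rho$ formal expansions and, by (c), $|\mathcal{R}|=q^\rho$. Surjectivity onto $\mathcal{R}$ therefore forces injectivity, so each $a$ has exactly one expansion.
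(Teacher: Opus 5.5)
Your proof is correct and complete in all three parts. The paper gives no argument of its own for this proposition; it is quoted from \cite{dinh2017repeated} without proof. So what you have written is a self-contained proof where the paper has only a citation, and the useful comparison is between your route and the usual way these facts are obtained.

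\textbf{Order of (b) and (c).} One can first establish the $\gamma$-adic expansion and then read off (c) as $|\langle\gamma^i\rangle|=|\mathbb{T}|^{\rho-i}$. You instead get (c) directly from the isomorphisms $\mathcal{R}/\langle\gamma\rangle\cong\langle\gamma^i\rangle/\langle\gamma^{i+1}\rangle$, after which uniqueness of the expansion is a pure counting statement. This has two benefits:
\begin{itemize}
\item You never need to cancel the zero divisor $\gamma$. A direct inductive uniqueness proof needs the fact that $\gamma^k c\in\langle\gamma^{k+1}\rangle$ with $k<\rho$ forces $c\in\langle\gamma\rangle$, which is exactly your kernel computation in (c).
\item Your existence and uniqueness argument works verbatim for \emph{any} complete set of representatives of $\mathcal{R}/\langle\gamma\rangle$. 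The multiplicative structure of $\mathbb{T}$ is needed only to produce $\theta$.
\end{itemize}

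\textbf{Construction of $\theta$.} Setting $\theta=u^{q^{\rho-1}}$ is a more elementary substitute for Hensel-lifting a root of the separable polynomial $x^{q-1}-1$, and it yields the same element. Indeed $\bar u^{\,q^{\rho-1}}=\bar u$ already in $\mathbb{F}_q$, so your coprimality remark can be replaced by this identity. Moreover, the units whose order divides $q-1$ form a subgroup mapping isomorphically onto $\mathbb{F}_q^{*}$, so $\theta$ is the unique such lift of $\bar u$.

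\textbf{What each route costs.} Your use of Lagrange's theorem and the final counting step depend essentially on $\mathcal{R}$ being finite. The Hensel-and-successive-reduction route also works for infinite complete local rings. For the finite chain rings in this paper, that difference costs you nothing.
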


From the proposition above, it clearly follows that every polynomial $f(x)=a_0+a_1x+\dots+a_nx^n$ with coefficients in $\mathcal{R}$ admits the representation $f(x)=c_0(x)+\gamma c_1(x)+\dots+\gamma^{\rho-1}c_{\rho-1}(x)$, where $c_i(x) \in \mathbb{T}[x]$ for each $i$ satisfying $0 \le i \le \rho-1$.

A code $\mathcal{C}$ of length $\ell$ over $\mathcal{R}$ is any nonempty subset of $\mathcal{R}^{\ell}$. Such a code $\mathcal{C}$ is called linear when it forms an $\mathcal{R}$-submodule of $\mathcal{R}^{\ell}$. Let $\lambda$ be a unit in $\mathcal{R}$. A linear code $\mathcal{C}$ is termed
$\lambda$-constacyclic if it remains invariant under the
$\lambda$-constacyclic shift of each of its codewords., that is, whenever $(\eta_0, \eta_1, \dots, \eta_{\ell-1}) \in \mathcal{C}$, we also have $(\lambda \eta_{\ell-1}, \eta_0, \eta_1, \dots, \eta_{\ell-2}) \in \mathcal{C}$. The cases $\lambda=1~\text{and}~\lambda=-1$ correspond to well-known cyclic and negacyclic codes. Every codeword $(\eta_0, \eta_1, \dots, \eta_{\ell-1}) \in \mathcal{C}$ can be represented as a polynomial $\eta_0+\eta_1x+\dots+\eta_{\ell-1}x^{\ell-1}$ in $\mathcal{R}[x]/\langle x^{\ell} - \lambda \rangle$. Define a map $\pi:\mathcal{R} \to \mathcal{R}[x]/\langle x^{\ell} - \lambda \rangle$ by $\pi(\eta_0, \eta_1, \dots, \eta_{\ell-1})=\eta_0+\eta_1x+\dots+\eta_{\ell-1}x^{\ell-1}~\pmod{(x^{\ell}-\lambda)}$. It is easy to observe that $\mathcal{C}$ is a $\lambda$-constacyclic code if and only if $\pi(\mathcal{C})$ forms an ideal of the ring $\mathcal{R}[x]/\langle x^{\ell} - \lambda \rangle$.

Consider two codewords $\eta = (\eta_0, \eta_1, \dots, \eta_{\ell-1})$ and $\eta' = (\eta'_0, \eta'_1, \dots, \eta'_{\ell-1})$ in $\mathcal{C}$. The Hamming distance between these two codewords is defined as $d(\eta,\eta') = \sum_{i=0}^{n-1} \delta(\eta_i,\eta'_i)$, where $\delta(\eta_i,\eta'_i)=\begin{cases} 0 & \eta_i = \eta'_i\\ 1 & \eta_i \ne \eta'_i \end{cases}$. For a code $\mathcal{C}$, its minimum Hamming distance is given by $d(\mathcal{C})=\min_{{\eta,\eta'} \in \mathcal{C},\, \eta \ne {\eta'}} d({\eta,\eta'})$. The Hamming weight of a codeword $\eta \in \mathcal{C}$, denoted by $wt(\eta)$, is equal to the number of nonzero components of $\eta$. The minimum Hamming weight of the code $\mathcal{C}$ is defined as $wt(\mathcal{C}) = \min_{\eta \in \mathcal{C},\, \eta \ne 0} wt(\eta)$. Observe that for a linear code $\mathcal{C}$, $d(\mathcal{C}) = wt(\mathcal{C})$.

A spanning set of a code $\mathcal{C}$ is a subset $S$ of $\mathcal{C}$ such that every codeword in $\mathcal{C}$ can be expressed as a linear combination of the codewords in $S$ with coefficients in $\mathcal{R}$. The set $S$ is called a minimal spanning set of $\mathcal{C}$ if every proper subset of $S$ fails to span $\mathcal{C}$. Note that a minimal spanning set of $\mathcal{C}$ is not unique. However, it is easy to observe that the number of elements remains unchanged in any two minimal spanning sets of $\mathcal{C}$. The rank of $\mathcal{C}$, denoted by $Rank(\mathcal{C})$, is given by the cardinality of a minimal spanning set of the code $\mathcal{C}$. For any linear code $\mathcal{C}$, $d(\mathcal{C}) \le \ell - Rank(\mathcal{C}) + 1$. The code $\mathcal{C}$ is said to be a Maximum Distance Separable (MDS) code with respect to the Hamming metric if $|\mathcal{C}| = |\mathcal{R}|^{(\ell-d(\mathcal{C})+1)}$. The code $\mathcal{C}$ is termed as a Maximum Hamming Distance with respect to Rank (MHDR) if the equality $d(\mathcal{C}) = \ell - Rank(\mathcal{C}) + 1$ is satisfied.

\section{Constacyclic codes over finite chain rings}\label{sec:structure}
Let $\mathcal{R}$ denotes a finite commutative chain ring whose maximal ideal is generated by $\gamma$, having nilpotency index $\rho$, and residue field $\mathbb{F}_q$. This section aims to describe the generators of $\lambda$-constacyclic codes of length $\ell$ over $\mathcal{R}$. We also derive a minimal spanning set and determine the rank of such codes over $\mathcal{R}$. We further obtain necessary and sufficient conditions under which a constacyclic code is MHDR and MDS.

\subsection{Structure of constacyclic codes over finite chain rings}\label{subsec:one}
Let $\mathcal{C}$ be a $\lambda$-constacyclic code of length $\ell$ over $\mathcal{R}$. Viewing $\mathcal{C}$ as an ideal of the ring $\mathcal{R}[x]/ \langle x^{\ell}- \lambda \rangle$, consider all minimum degree polynomials of $\mathcal{C}$. From these polynomials, choose a polynomial $f_0(x)$ whose leading coefficient contains the lowest possible power of $\gamma$, say $r_0$. Consider the ideal generated by $f_0(x)$ in $\mathcal{C}$. If $\mathcal{C}= \langle f_0(x) \rangle$, then we have obtained the structure of $\mathcal{C}$ in terms of its generator. If $\langle f_0(x) \rangle$ is a proper subset of $\mathcal{C}$, then consider all least degree polynomials in $\mathcal{C} \setminus \langle f_0(x) \rangle$. Among these polynomials, choose a polynomial $f_1(x)$ whose leading coefficient contains the least power of $\gamma$, say $r_1$. Let $\deg(f_0(x))=n_0$ and $\deg(f_1(x))=n_1$. It is easy to see that $n_1>n_0$. We claim that $r_0>r_1$. Otherwise, if $r_1 \geq r_0$, then $f_1(x)-\gamma^{r_1-r_0}x^{n_1-n_0}f_0(x)=g(x), ~\text{where}~ g(x)=0 ~\text{or}~\deg(g(x))<\deg(f_1(x))$. In both cases, $f_1(x)\in \langle f_0(x) \rangle$, which is a contradiction. Thus $r_0 > r_1$. If $\mathcal{C}= \langle f_0(x),f_1(x) \rangle$, we stop the process. Otherwise, consider all the least degree polynomials in $\mathcal{C} \setminus \langle f_0(x),f_1(x) \rangle$. Among these polynomials, choose a polynomial $f_2(x)$ whose leading coefficient contains the least power of $\gamma$, say $r_2$. Let $\deg(f_2(x))=n_2$. It is clear that $n_2>n_1$. Also, by the similar argument as given for proving $r_0>r_1$, we can prove that $r_1>r_2$. Again if $\mathcal{C}= \langle f_0(x),f_1(x),f_2(x)\rangle$, then we are done. Otherwise, we will proceed in the same manner as above to find the polynomials $f_3(x),f_4(x), \dots$ such that $n_0<n_1<n_2< \dots ~\text{and}~ r_0>r_1>r_2> \dots$. Since the possible powers of $\gamma$ lie between $0$ to $\rho-1$, the above process must terminate, i.e., for a positive integer $s$, there exists a least degree polynomial $f_s(x)$ in $\mathcal{C} \setminus \langle f_0(x),f_1(x), \dots, f_{s-1}(x) \rangle$ whose leading coefficient contains the least power of $\gamma$, say $r_s$, such that  $\mathcal{C}= \langle f_0(x),f_1(x),\dots f_s(x) \rangle$. Let $\deg(f_s(x))=n_s$. It is clear that $n_s>\dots >n_1>n_0$ and $r_0>r_1> \dots >r_s$.

The following theorem is obtained from the construction given above.
\begin{theorem}\label{thm:ccspanning}
    Let $\mathcal{C}$ be a $\lambda$-constacyclic code of length $\ell$ over the FCR $\mathcal{R}$ and $f_i(x), 0 \le i \le s$ be the polynomials as described above. Then the set $\{ f_0(x),f_1(x), \dots , f_s(x) \}$ generates  $\mathcal{C}$ as an ideal of the ring $\mathcal{R}[x]/ \langle x^{\ell}- \lambda \rangle$.
\end{theorem}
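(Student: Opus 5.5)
The theorem follows once the iterative construction preceding it is shown to be well defined and to terminate with $\mathcal{C}=\langle f_0(x),\dots,f_s(x)\rangle$. My plan is to formalize that construction as a strictly increasing chain of ideals $I_0\subsetneq I_1\subsetneq\cdots$ with $I_k=\langle f_0(x),\dots,f_k(x)\rangle\subseteq\mathcal{C}$. I will show the chain must stop, and that it can only stop when $I_s=\mathcal{C}$.

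\textbf{Step 1 (each step is well defined).} At stage $k$, if $I_{k-1}\neq\mathcal{C}$, the set $\mathcal{C}\setminus I_{k-1}$ is nonempty and finite. It therefore contains elements of least degree, and among these one whose leading coefficient $\gamma^{r_k}u$ (with $u$ a unit) has least exponent $r_k$. Every nonzero element of $\mathcal{R}$ has this form, since by Proposition~\ref{prop:first} it equals $\gamma^{r}$ times a unit. Here degrees are taken for the unique representative of degree $<\ell$. Since $f_k(x)\in\mathcal{C}$ and $\mathcal{C}$ is an ideal, $I_k\subseteq\mathcal{C}$, and $I_{k-1}\subsetneq I_k$.

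\textbf{Step 2 (monotonicity).} First, $n_k>n_{k-1}$. By minimality of $n_{k-1}$ in $\mathcal{C}\setminus I_{k-2}\supseteq\mathcal{C}\setminus I_{k-1}$ we get $n_k\ge n_{k-1}$. If $n_k=n_{k-1}$, I compare leading exponents. The case $r_k<r_{k-1}$ is impossible, because $f_k$ would then have been a better choice at stage $k-1$. Otherwise $r_k\ge r_{k-1}$, and I reduce $f_k$ by a suitable multiple of $f_{k-1}$ exactly as in the text. Next, $r_{k-1}>r_k$ by the argument already given. For $r_k\ge r_{k-1}$, the element
\[
g(x)=f_k(x)-\gamma^{r_k-r_{k-1}}u_ku_{k-1}^{-1}x^{n_k-n_{k-1}}f_{k-1}(x)
\]
lies in $\mathcal{C}$ and has degree $<n_k$ (or is $0$). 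Note that the unit correction $u_k u_{k-1}^{-1}$ must be included, which the text glosses over. Minimality of $n_k$ then forces $g\in I_{k-1}$, so $f_k\in I_{k-1}$, a contradiction. One care point is that multiplying by $x^{n_k-n_{k-1}}$ does not wrap around modulo $x^\ell-\lambda$, because $n_k<\ell$. The same reduction also settles the equal-degree case in the first claim.

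\textbf{Step 3 (termination and conclusion).} Since $\rho-1\ge r_0>r_1>\cdots\ge 0$, at most $\rho$ polynomials can be chosen. The process therefore halts at some $s$, and by Step 1 it halts only when $I_s=\mathcal{C}$, which is the theorem. Alternatively, finiteness of $\mathcal{C}$ together with the strict inclusions already guarantees termination. The main obstacle is the careful justification in Step 2: choosing the unit correction, and handling degrees in the quotient ring so that the reduction $g$ genuinely stays in $\mathcal{C}$ with smaller degree.
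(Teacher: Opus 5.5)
Your proposal is correct and follows essentially the same route as the paper: the iterative construction before the theorem, with termination forced by $\rho-1\ge r_0>r_1>\cdots\ge 0$, and the stopping rule then giving $\mathcal{C}=\langle f_0(x),\dots,f_s(x)\rangle$. Your additions fill in details the paper leaves implicit: the unit factor $u_ku_{k-1}^{-1}$ in the reduction, the explicit proof that $n_k>n_{k-1}$, the no-wrap-around remark, and the alternative termination argument via finiteness of $\mathcal{C}$.
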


Note that according to the construction given above, we may end up choosing more than one set of generators of $\mathcal{C}$. However, it is easy to see  the number of generators in any two generating sets of $\mathcal{C}$ chosen by this procedure will be the same. Further, if $\{ f_0(x),f_1(x), \dots, f_s(x)\}~~\text{and}~~\{g_0(x),g_1(x), \dots, g_s(x)\}$ are two generating sets of $\mathcal{C}$ obtained by the above procedure we must have $\deg(f_i(x))=\deg(g_i(x))$ for all $0 \le i \le s$.

\begin{lemma}\label{lem:first}
    Let $\mathcal{C}=\langle f_0(x),f_1(x),\dots, f_s(x)\rangle$ be a $\lambda$-constacyclic code of length $\ell$ over $\mathcal{R}$, where $f_i(x),~0 \le i \le s$, be the polynomials as described above. Then, for every polynomial $q(x) \in \mathcal{C}$ such that $\deg(q(x))<n_i$, $q(x) \in \langle f_0(x),f_1(x), \dots, f_{i-1}(x) \rangle ~\text{for}~ 1 \le i \le s$.
\end{lemma}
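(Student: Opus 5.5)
The argument follows directly from how $f_i(x)$ was chosen in the construction preceding Theorem~\ref{thm:ccspanning}. Fix $i$ with $1 \le i \le s$ and let $q(x) \in \mathcal{C}$ with $\deg(q(x)) < n_i$. We argue by contradiction: suppose $q(x) \notin \langle f_0(x), f_1(x), \dots, f_{i-1}(x) \rangle$.

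Then $q(x)$ lies in $\mathcal{C} \setminus \langle f_0(x), \dots, f_{i-1}(x) \rangle$. This set is nonempty, and by construction $f_i(x)$ was chosen among the polynomials of \emph{least degree} in it. Hence every element of this set has degree at least $n_i = \deg(f_i(x))$. This gives $\deg(q(x)) \ge n_i$, contradicting $\deg(q(x)) < n_i$. Therefore $q(x) \in \langle f_0(x), \dots, f_{i-1}(x) \rangle$.

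Two points need checking. First, the zero polynomial (if one regards it as having degree $<n_i$) trivially lies in every ideal, so we may assume $q(x) \neq 0$. Second, degrees must be taken for representatives of degree $<\ell$ in $\mathcal{R}[x]/\langle x^{\ell}-\lambda\rangle$. The construction measures degree the same way, since it compares degrees of elements of $\mathcal{C}$ viewed as reduced polynomials, so the comparison is consistent.

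The only potential obstacle is this well-definedness of "least degree" across the quotient ring. Once the convention of reduced representatives is fixed, the minimality of $n_i$ in the construction gives the lemma immediately, with no further computation.
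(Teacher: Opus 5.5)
Your proof is correct and is essentially the paper's argument. Both rest on the fact that $n_i$ is the least degree of any element of $\mathcal{C}\setminus\langle f_0(x),\dots,f_{i-1}(x)\rangle$, so a $q(x)\in\mathcal{C}$ of smaller degree must lie in $\langle f_0(x),\dots,f_{i-1}(x)\rangle$; your remarks on the zero polynomial and on reduced representatives are harmless extra care that the paper leaves implicit.
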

\begin{proof}
    Since $\deg(q(x))<n_i$ and $n_i$ is the minimum degree among all the polynomials in $\mathcal{C} \setminus \langle f_0(x),f_1(x),\dots,f_{i-1}(x) \rangle$, it implies that $q(x) \notin \mathcal{C} \setminus \langle f_0(x),f_1(x),\dots,f_{i-1}(x) \rangle$. This proves that $q(x) \in \langle f_0(x),f_1(x),\dots,f_{i-1}(x) \rangle$.
\end{proof}

\begin{theorem}
    Let $\mathcal{C}$ be a $\lambda$-constacyclic code generated by the set $\{ f_0(x),f_1(x), \dots , f_s(x) \}$, where $f_i(x),~ 0 \le i \le s$ are the polynomials as defined above. Then
    \begin{enumerate}
        \item $f_0(x)$, the minimum degree polynomial in $\mathcal{C}$, is uniquely determined.
        \item Each of the $f_i(x)$ is uniquely determined modulo $\langle f_0(x),f_1(x), \dots , f_{i-1}(x) \rangle$ for $1 \le i \le s$.
        \item $\gamma^{r_{i-1}-r_i}f_i(x) \in \langle f_0(x), f_1(x), \dots, f_{i-1}(x) \rangle$ for $1 \le i \le s$. \label{thm:parttwo}
        \item For $0 \le i \le s,~~ f_i(x)=\gamma^{r_i}h_i(x)$, where $h_i(x)$ is a monic polynomial.\label{thm:partthree}
        \item $h_0(x) \mid h_1(x) \pmod{\gamma^{\rho-r_0}}$, $h_{i-1}(x) \mid h_i(x) \pmod{\gamma^{r_{i-2}-r_{i-1}}}$ for $2 \le i \le s$, and $h_s(x) \mid (x^{\ell}-\lambda) \pmod{\gamma^{r_{s-1}-r_s}}$.
    \end{enumerate}
\end{theorem}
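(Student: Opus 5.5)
The plan is to prove part 4 first, since it underlies the later parts, then parts 1--3, and finally part 5. Throughout I use one basic tool: division with remainder by a polynomial whose leading coefficient is a unit (up to a power of $\gamma$), combined with Lemma \ref{lem:first}. I also use the elementary chain-ring fact that $\gamma^a c\in\langle\gamma^b\rangle$ with $a\le b<\rho$ implies $c\in\langle\gamma^{b-a}\rangle$.

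\emph{Normalization (part 4).} The leading coefficient of $f_i(x)$ is $\gamma^{r_i}u$ with $u$ a unit. Multiplying by $u^{-1}$ does not change any of the defining minimality properties, so we may assume the leading coefficient is exactly $\gamma^{r_i}$. It then remains to show that every coefficient of $f_i(x)$ lies in $\langle\gamma^{r_i}\rangle$.
\begin{itemize}
\item For $i=0$: suppose some coefficient has $\gamma$-valuation $t<r_0$. Then $\gamma^{\rho-r_0}f_0(x)$ is a nonzero element of $\mathcal{C}$ of degree $<n_0$, a contradiction.
\item For $i\ge1$: the same element $\gamma^{\rho-r_i}f_i(x)$ only lands, via Lemma \ref{lem:first}, in $\langle f_0,\dots,f_{i-1}\rangle$. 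So the argument must instead use the freedom of part 2 to modify $f_i$ modulo $\langle f_0,\dots,f_{i-1}\rangle$. I would proceed by induction on $i$, subtracting suitable multiples $c\,x^{k}f_j(x)$, $j<i$, to clear the lower coefficients of small valuation.
\end{itemize}
I expect this step to be the main obstacle, and it may require stating part 4 for a suitably chosen representative of $f_i$.

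\emph{Parts 1--3.} These are degree arguments.
\begin{itemize}
\item Part 1: if $f_0,g_0$ are two normalized choices, then $f_0-g_0\in\mathcal{C}$ has degree $<n_0$, hence is $0$.
\item Part 2: for two normalized choices $f_i,g_i$, the difference has degree $<n_i$, so Lemma \ref{lem:first} puts it in $\langle f_0,\dots,f_{i-1}\rangle$.
\item Part 3: $\gamma^{r_{i-1}-r_i}f_i(x)$ has leading coefficient $\gamma^{r_{i-1}}$. Therefore $\gamma^{r_{i-1}-r_i}f_i(x)-x^{n_i-n_{i-1}}f_{i-1}(x)\in\mathcal{C}$ has degree $<n_i$, and Lemma \ref{lem:first} gives the claim.
\end{itemize}

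\emph{Part 5.} I would use division by the monic $h_{i-1}(x)$ in $\mathcal{R}[x]$: write $h_i=Q h_{i-1}+R$ with $\deg R<n_{i-1}$. Multiplying by $\gamma^{r_{i-1}}$ gives
\[
\gamma^{r_{i-1}}R=\gamma^{r_{i-1}-r_i}f_i-Qf_{i-1}\in\mathcal{C},
\]
an element of degree $<n_{i-1}$. By Lemma \ref{lem:first} it lies in $\langle f_0,\dots,f_{i-2}\rangle$ (for $i=1$ it is simply $0$). By part 4, every element of $\langle f_0,\dots,f_{i-2}\rangle$ has all coefficients in $\langle\gamma^{r_{i-2}}\rangle$, since $r_j\ge r_{i-2}$ for $j\le i-2$. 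This property survives reduction modulo the monic $x^\ell-\lambda$, and because $\deg R<\ell$ no reduction is needed on the left side. Comparing coefficients gives $\gamma^{r_{i-1}}R\equiv0 \pmod{\gamma^{r_{i-2}}}$, hence $R\equiv 0\pmod{\gamma^{r_{i-2}-r_{i-1}}}$. For $i=1$ the same computation gives $\gamma^{r_0}R=0$, i.e.\ $R\equiv0\pmod{\gamma^{\rho-r_0}}$.

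For the last divisibility, divide $x^\ell-\lambda=Qh_s+R$ with $\deg R<n_s$. In the quotient ring, $\gamma^{r_s}R=-Qf_s\in\mathcal{C}$ has degree $<n_s$, so it lies in $\langle f_0,\dots,f_{s-1}\rangle$. The same coefficient argument then gives $R\equiv0\pmod{\gamma^{r_{s-1}-r_s}}$.
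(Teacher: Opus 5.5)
You prove parts 1--3 correctly and in the same way as the paper, and part 5 is correct \emph{given} part 4. However, you never prove part 4 for $i\ge 1$, and the repair you sketch cannot work. By the induction hypothesis, each $f_j(x)$ with $j<i$ has all coefficients in $\langle\gamma^{r_j}\rangle\subseteq\langle\gamma^{r_{i-1}}\rangle$. Reduction modulo the monic $x^{\ell}-\lambda$ preserves this. So subtracting multiples $c\,x^kf_j(x)$ from $f_i(x)$ never changes a coefficient of $f_i(x)$ modulo $\gamma^{r_{i-1}}$. Since $r_{i-1}>r_i$, a coefficient of valuation $t<r_i$ would keep valuation $t$ after any such modification. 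Choosing a special representative therefore gains nothing. Part 4 must be proved for whatever $f_i(x)$ the construction produces, by showing such coefficients cannot occur, and it does hold for every admissible choice.

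The missing idea is to reorder the argument. Prove part 3 before part 4; your proof of part 3 uses only the normalization of the leading coefficient, so there is no circularity. Then feed part 3 into the induction, using the same coefficient comparison you already use in part 5:
\begin{itemize}
\item Part 3 gives $\gamma^{r_{i-1}-r_i}f_i(x)\in\langle f_0(x),\dots,f_{i-1}(x)\rangle$.
\item By the induction hypothesis, every element of this ideal has all its coefficients in $\langle\gamma^{r_{i-1}}\rangle$.
\item Because $\deg f_i<\ell$, the comparison is coefficientwise, so $\gamma^{r_{i-1}-r_i}a\in\langle\gamma^{r_{i-1}}\rangle$ for every coefficient $a$ of $f_i(x)$.
\item Your chain-ring fact then yields $a\in\langle\gamma^{r_i}\rangle$.
\end{itemize}
Equivalently, as in the paper, multiplying by $\gamma^{\rho-r_{i-1}}$ gives $\gamma^{\rho-r_i}f_i(x)=0$.

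With this step inserted, your proof is complete. Your part 5 divides by the monic $h_{i-1}(x)$ in $\mathcal{R}[x]$. This gives the divisibility in $\mathcal{R}[x]$ modulo the relevant power of $\gamma$, whereas the paper only gets it in the quotient ring. It also treats the $x^{\ell}-\lambda$ case explicitly, where the paper only says to proceed similarly.
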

\begin{proof}
    \begin{enumerate}
        \item Suppose there exists another polynomial $g_0(x)$ such that $\deg(g_0(x))=n_0$ and the power of $\gamma$ in the leading coefficient of $g_0(x)$ is $r_0$. Then there exists a unit $u_0$ such that $f_0(x)-u_0g_0(x)$ is either $0$ or a polynomial in $\mathcal{C}$ of degree less than $n_0$. This is a contradiction. Therefore, $f_0(x)$ is uniquely determined.
        \item If there exists some $f_i(x),~1 \le i \le s$, that is not uniquely determined modulo $\langle f_0(x),f_1(x), \dots , f_{i-1}(x) \rangle$, then there exists $g_i(x)$ such that $\deg(g_i(x))=n_i$ and the power of $\gamma$ in the leading coefficient of $g_i(x)$ is $r_i$. Then there exists a unit $u_i$ such that $f_i(x)-u_ig_i(x)$ is either $0$ or a polynomial in $\mathcal{C}$ of degree less than $n_i$. By Lemma~\ref{lem:first}, $f_i(x)-u_ig_i(x) \in \langle f_0(x),f_1(x), \dots, f_{i-1}(x) \rangle$. It follows that $f_i(x)\equiv u_ig_i(x)$ modulo $\langle f_0(x),f_1(x), \dots, f_{i-1}(x) \rangle$. Hence, $f_i(x)$ is uniquely determined modulo $\langle f_0(x),f_1(x), \dots, f_{i-1}(x) \rangle$.
        \item Clearly, the polynomial $\gamma^{r_{i-1}-r_i}f_i(x)-x^{n_i-n_{i-1}}u_if_{i-1}(x) \in \mathcal{C}$, has degree less than $n_i$, where $u_i$ is a unit in $\mathcal{R}$. Therefore, by Lemma~\ref{lem:first}, $\gamma^{r_{i-1}-r_i}f_i(x)-x^{n_i-n_{i-1}}u_if_{i-1}(x) \in \langle f_0(x),f_1(x), \dots, f_{i-1}(x) \rangle$. It follows that $\gamma^{r_{i-1}-r_i}f_i(x) \in \langle f_0(x), f_1(x), \dots, f_{i-1}(x) \rangle$ for $1 \le i \le s$.
        \item The proof is carried out by induction on $i$. For $i=0$, let $f_0(x)= \gamma^{r_0}u_0x^{n_0}+ a_{n_0-1}x^{n_0-1}+ \dots+a_1x+a_0$, where $u_0$ is a unit in $\mathcal{R}$ and $a_i \in \mathcal{R},~ 0 \le i \le n_0-1$. If $a_k \not\equiv 0 \pmod{\gamma^{r_0}}$ for some $k, ~0 \le k \le n_0-1$, then $\gamma^{\rho-r_0}f_0(x)$ is a polynomial in $\mathcal{C}$ with degree less than $n_0$, which contradicts the fact that the degree of $f_0(x)$ in $\mathcal{C}$ is minimal. Therefore, $a_k \equiv 0 \pmod{\gamma^{r_0}}~ \forall~ 0 \le k \le n_0-1$ implying that $f_0(x)=\gamma^{r_0}h_0(x)$, where $h_0(x)$ is a monic polynomial. Thus, the result is true for $i=0$.

        Let us assume that the result is true for all $i \le k-1$ i.e. $f_i(x)=\gamma^{r_i}h_i(x)$, where $h_i(x)$ is a monic polynomial for all $0 \le i \le k-1$.

        Now we prove the result for $i=k$. We have $\gamma^{r_{i-1}-r_i}f_i(x) \in \langle f_0(x), f_1(x), \dots, f_{i-1}(x) \rangle$ by part~\ref{thm:parttwo}. Therefore, there exist polynomials $c_0(x),c_1(x),\dots,c_{k-1}(x)$ in $ \mathcal{R}[x]/ \langle x^{\ell}-\lambda \rangle$ such that
        \begin{align*}
            \gamma^{r_{k-1}-r_k}f_k(x) = \sum_{i=0}^{k-1}c_i(x)f_i(x)
            =\sum_{i=0}^{k-1}\gamma^{r_i}c_i(x)h_i(x)=\gamma^{r_{k-1}}\sum_{i=0}^{k-1}\gamma^{r_i-r_{k-1}}c_k(x)h_k(x)
        \end{align*}
        
        It follows that $\gamma^{\rho-r_k}f_k(x)=0$ and therefore, $f_k(x)=\gamma^{r_k}h_k(x)$, where $h_k(x)$ is a monic polynomial.
        
        Hence, $f_i(x)=\gamma^{r_i}h_i(x)$, where $h_i(x)$ is a monic polynomial for $0 \le i \le s$, by mathematical induction.
        
        \item From part~\ref{thm:parttwo}, we have $\gamma^{r_0-r_1}f_1(x) \in \langle f_0(x) \rangle$. Therefore, $\gamma^{r_0-r_1}f_1(x)=c_0(x)f_0(x)$, where $c_0(x) \in \mathcal{R}[x]/\langle x^{\ell}-\lambda \rangle$. This together with part~\ref{thm:partthree} implies that $\gamma^{r_0}(h_1(x)-c_0(x)h_0(x))=0$, which further implies that $h_1(x)-c_0(x)h_0(x)=0 \pmod{\gamma^{\rho-r_0}}$. Therefore, $h_0(x) \mid h_1(x) \pmod{\gamma^{\rho-r_0}}$.

        Again using part~\ref{thm:parttwo}, there exists $c_0(x),c_1(x), \dots, c_{i-1}(x) \in \mathcal{R}[x]/\langle x^{\ell}-\lambda\rangle$ such that
        \begin{align*}
            \gamma^{r_{i-1}-r_i}f_i(x)&=\sum_{k=0}^{i-1}c_k(x)f_k(x)
        \end{align*}
        Above equation together with part~\ref{thm:partthree} implies that
        \begin{align*}
            \gamma^{r_{i-1}}h_i(x)=\sum_{k=0}^{i-1}c_k(x)\gamma^{r_k}h_k(x)=\gamma^{r_{i-1}}c_{i-1}(x)h_{i-1}(x)+\sum_{k=0}^{i-2}c_k(x)\gamma^{r_k}h_k(x)
        \end{align*}
        It follows that
        \begin{align*}
            \gamma^{r_{i-1}}(h_i(x)-c_{i-1}(x)h_{i-1}(x))=\sum_{k=0}^{i-2}c_k(x)\gamma^{r_k}h_k(x)=\gamma^{r_{i-2}}\sum_{k=0}^{i-2}c_k(x)\gamma^{r_k-r_{i-2}}h_k(x)
        \end{align*}
           It follows that $\gamma^{\rho-(r_{i-2}-r_{i-1})}(h_i(x)-c_{i-1}(x)h_{i-1}(x))=0$, which implies that $h_i(x)-c_{i-1}(x)h_{i-1}(x)\equiv 0 \pmod{\gamma^{r_{i-2}-r_{i-1}}}$. Therefore, $h_{i-1}(x) \mid h_i(x) \pmod{\gamma^{r_{i-2}-r_{i-1}}}$ for $2 \le i \le s$.

        Now to prove that $h_s(x) \mid (x^{\ell}-\lambda) \pmod{\gamma^{r_{s-1}-r_s}}$, we see that $\gamma^{r_s}(x^{\ell}-\lambda) \in \mathcal{C}$ which implies that $\gamma^{r_s}(x^{\ell}-\lambda) \in \langle f_0(x),f_1(x), \dots, f_s(x) \rangle$. Now we will proceed in a similar manner as above to obtain the required result.
    \end{enumerate}
\end{proof}

The following theorem provides a minimal spanning set for a $\lambda$-constacyclic code $\mathcal{C}$ of length $\ell$ over a FCR $\mathcal{R}$ along with its rank.
\begin{theorem}\label{thm:rank}
    Let $\mathcal{C}= \langle f_0(x),f_1(x), \dots , f_s(x) \rangle $ be a $\lambda$-constacyclic code of length $\ell$ over $\mathcal{R}$, where $f_0(x),f_1(x), \dots ,f_s(x)$ generators $\mathcal{C}$ as given in Theorem~\ref{thm:ccspanning}. For $0 \le i \le s$, let $S_i=\{ f_i(x), xf_i(x), \ldots, x^{n_{i+1}-n_{i}-1}f_i(x) \}, ~\text{where}~~n_{s+1}=\ell$. Then
    \begin{enumerate}
        \item ${S}=\bigcup_{i=0}^{s}S_i$ is a minimal spanning set of $\mathcal{C}$.
        \item $Rank(\mathcal{C})=\ell-n_0$, where $n_0=\deg(f_0(x))$ and $f_0(x)$ is a polynomial of minimal degree in $\mathcal{C}$.\label{part:rank}
    \end{enumerate}
\end{theorem}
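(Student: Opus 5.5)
The proof has three steps: show that $S$ spans $\mathcal{C}$ by reducing codewords along degrees, show that no element of $S$ can be dropped, and then count.

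\textbf{Spanning.} Every element of $\mathcal{C}$ is a polynomial of degree $<\ell$, so it suffices to show the following claim by strong induction on the degree $d$: every $q(x)\in\mathcal{C}$ with $\deg q<\ell$ is an $\mathcal{R}$-linear combination of elements of $S$.
\begin{itemize}
\item By Theorem~\ref{thm:ccspanning} and the minimality of $n_0$, any nonzero $q(x)$ has $d\ge n_0$. Choose the unique $i$ with $n_i\le d<n_{i+1}$, using the convention $n_{s+1}=\ell$.
\item By the construction, $n_i$ is the least degree in $\mathcal{C}\setminus\langle f_0,\dots,f_{i-1}\rangle$ and $r_i$ is the least $\gamma$-power among such leading coefficients. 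We need the leading coefficient $a$ of $q$ to lie in $\langle\gamma^{r_i}\rangle$.
\item Suppose instead that $a=\gamma^{t}u$ with $u$ a unit and $t<r_i$. If $q\notin\langle f_0,\dots,f_i\rangle$, its degree would be at least $n_{i+1}>d$ (Lemma~\ref{lem:first} with index $i+1$), which is impossible. So $q\in\langle f_0,\dots,f_i\rangle$.
\item To rule out $t<r_i$ in that case, look at $\mathcal{C}_i=\langle f_0,\dots,f_i\rangle$. Elements of degree $<n_{i+1}$ in it are elements of $\mathcal{C}$, and by part 4 each $f_j=\gamma^{r_j}h_j$ with $r_j\ge r_i$. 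I would argue that every element of $\langle f_0,\dots,f_i\rangle$ reduced mod $x^\ell-\lambda$ has coefficients in $\langle\gamma^{r_i}\rangle$, since $\gamma^{r_i}$ divides every generator. This gives $t\ge r_i$.
\item Then $q(x)-\gamma^{t-r_i}u\,x^{d-n_i}f_i(x)$ has degree $<d$. The factor $x^{d-n_i}f_i$ lies in $S_i$ because $d-n_i\le n_{i+1}-n_i-1$, and it is an honest polynomial of degree $<\ell$, so no reduction mod $x^\ell-\lambda$ occurs. Induction finishes the spanning step.
\end{itemize}

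\textbf{Minimality.} Suppose some $x^jf_i$ is a combination of the other elements of $S$. Write
\[
\sum_{g\in S}c_g\,g=0
\]
with $c_{x^jf_i}=-1$. The elements of $S$ have degrees $n_i+j$, which are all distinct as $(i,j)$ ranges, and their leading coefficients are $\gamma^{r_i}$, since each $h_i$ is monic (part 4).
\begin{itemize}
\item Take the top degree $D$ carrying a coefficient $c_g$ with $c_g\gamma^{r_{i'}}\neq0$. Only one element of $S$ has degree $D$, so that term cannot cancel, a contradiction. Hence every term of the relation vanishes.
\item In particular $-x^jf_i=0$ would be forced, which is absurd since $f_i\neq0$.
\end{itemize}
The subtle point is that the relation might involve coefficients $c_g$ with $c_g\gamma^{r}=0$. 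Such terms simply vanish, so in the argument I discard them first.

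\textbf{Rank.} The sets $S_i$ are disjoint because their degree ranges $[n_i,n_{i+1})$ are disjoint. Therefore
\[
|S|=\sum_{i=0}^{s}(n_{i+1}-n_i)=\ell-n_0,
\]
which is $Rank(\mathcal{C})$, since the paper notes that all minimal spanning sets have the same size.

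\textbf{Main obstacle.} The hard step is the leading-coefficient divisibility $t\ge r_i$ in the spanning argument, more precisely showing that every element of $\langle f_0,\dots,f_i\rangle$ is divisible by $\gamma^{r_i}$. This holds because ideal multiples of $\gamma^{r_i}h_j$ remain multiples of $\gamma^{r_i}$ after reduction modulo the monic $x^\ell-\lambda$. I will state this as a small sublemma before running the induction.
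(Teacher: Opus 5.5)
Your proof is correct, and for the spanning step it takes a different route from the paper.

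\textbf{What the paper does.} It starts from the larger set $S'=\bigcup_i\{f_i,xf_i,\dots,x^{\ell-n_i-1}f_i\}$ and asserts that $S'$ spans $\mathcal{C}$. It then proves, by induction on the generator index, that the first redundant shift $x^{n_{i+1}-n_i}f_i$ lies in $Span(S)$. To do this it subtracts $\gamma^{r_i-r_{i+1}}u_{i+1}f_{i+1}$ and applies Lemma~\ref{lem:first} to the difference. Minimality is only asserted to follow ``by considering the degrees''.

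\textbf{What you do.} You run a leading-term elimination directly on an arbitrary codeword. The key input is that every element of $\langle f_0,\dots,f_i\rangle$ has all coefficients in $\gamma^{r_i}\mathcal{R}$, which follows from part 4 and $r_0>\dots>r_i$. Hence a codeword of degree $d\in[n_i,n_{i+1})$ has leading coefficient divisible by $\gamma^{r_i}$. It can therefore be reduced by the single element $x^{d-n_i}f_i\in S_i$, with no wrap-around modulo $x^\ell-\lambda$.

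\textbf{Comparison.} Your route is self-contained and treats all degrees uniformly. In effect it proves several facts the paper uses without justification:
\begin{itemize}
\item that $S'$ spans $\mathcal{C}$;
\item the bounds $\deg c_j<n_{j+1}-n_j$ in the paper's inductive step;
\item that the higher shifts $x^kf_i$ with $k>n_{i+1}-n_i$ also lie in $Span(S)$, which the paper's reduction to the single shifts leaves implicit.
\end{itemize}
The paper's route instead foregrounds the relation between consecutive generators, the same one behind parts 3 and 5 of the preceding theorem.

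\textbf{Minimality and rank.} Your minimality argument supplies the detail the paper omits. In particular, you note that a term with $c_g\gamma^{r}=0$ vanishes entirely, because each element of $S$ is $\gamma^{r}$ times a polynomial with unit leading coefficient. The rank count is the same as the paper's.

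\textbf{Cosmetic points.} For $i=s$, membership $q\in\langle f_0,\dots,f_s\rangle$ comes directly from $\mathcal{C}=\langle f_0,\dots,f_s\rangle$, not from Lemma~\ref{lem:first}. More generally, $q\in\langle f_0,\dots,f_i\rangle$ follows from $d<n_{i+1}$ alone, so the ``suppose $t<r_i$'' framing is unnecessary.
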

\begin{proof}
    \begin{enumerate}
        \item Let $S'=\bigcup_{i=0}^s S_i'$, where $S_i'=\{f_i(x),xf_i(x), \dots , x^{\ell-n_i-1}f_i(x)\}~\text{for}~ 0 \le i \le s$. Clearly $S \subseteq S'$ and $S'$ spans $\mathcal{C}$. To prove that $S$ spans $\mathcal{C}$, it is sufficient to prove that $x^{n_{i+1}-n_{i}}f_i(x) \in Span(S)~~\text{for}~~0 \le i \le s-1$. We prove it by induction on $i$. For $i=0$, $x^{n_1-n_0}f_0(x)$ is a polynomial of degree $n_1$ in $\mathcal{C}$. Then, $x^{n_1-n_0}f_0(x)-\gamma^{r_0-r_1}u_1f_1(x)$ is a polynomial of degree less than $n_1$, where $u_1$ is a unit in $\mathcal{R}$. By Lemma~\ref{lem:first},  $x^{n_1-n_0}f_0(x)-\gamma^{r_0-r_1}u_1f_1(x)\in \langle f_0(x) \rangle$ which implies that $x^{n_1-n_0}f_0(x)-\gamma^{r_0-r_1}u_1f_1(x)= c_0(x)f_0(x)$, where $c_0(x) \in \mathcal{R}[x]$ with $\deg(c_0(x))<n_1-n_0$. Therefore, $x^{n_1-n_0}f_0(x)-\gamma^{r_0-r_1}u_1f_1(x) \in Span(S)$ which implies that $x^{n_1-n_0}f_0(x) \in Span(S)$. Hence, the result is true for $i=0$. Let us suppose that the result holds for all $i \le k-1$, i.e., $x^{n_{i+1}-n_{i}}f_i(x) \in Span(S)~~\text{for}~~0 \le i \le k-1$. Now we will prove the result for $i=k$. It is easy to see that $\deg(x^{n_{k+1}-n_{k}}f_k(x))=n_{k+1}~~\text{in}~~\mathcal{C}$. Then $x^{n_{k+1}-n_{k}}f_k(x)-\gamma^{r_k-r_{k+1}}u_{k+1}f_{k+1}(x)$ is a polynomial of degree less than $n_{k+1}$. Therefore, by Lemma~\ref{lem:first}, $x^{n_{k+1}-n_{k}}f_k(x)-\gamma^{r_k-r_{k+1}}u_{k+1}f_{k+1}(x)\in \langle f_0(x),f_1(x), \dots, f_k(x)\rangle$.  This implies that $x^{n_{k+1}-n_{k}}f_k(x)=\gamma^{r_k-r_{k+1}}u_{k+1}f_{k+1}(x)+c_0(x)f_0(x)+c_1(x)f_1(x)+\dots+c_k(x)f_k(x)$ where, $c_j(x)\in \mathcal{R}[x]$ such that $\deg(c_j(x))<n_{j+1}-n_j$ for all $0\le j \le k$. It follows that $c_j(x)f_j(x) \in Span(S)$ for all $0\le j \le k$. Therefore, $x^{n_{k+1}-n_{k}}f_k(x) \in Span(S)$. Thus, by mathematical induction $x^{n_{i+1}-n_{i}}f_i(x) \in Span(S)$ for all $0\le i \le s-1$. Hence, $S$ spans $\mathcal{C}$. The minimality of the spanning set $S$ can be easily proved by considering the degrees of various polynomials involved.
        \item By definition, $Rank(\mathcal{C})= |S|$, $S$ as a minimal spanning set of $\mathcal{C}$. Therefore, $|{S}|=\sum_{i=0}^{s}|S_i|=\sum_{i=0}^{s}(n_{i+1}-n_i)=n_{s+1}-n_0=\ell-n_0$. Hence $Rank(\mathcal{C})=\ell-n_0$.
        \end{enumerate}

\end{proof}
Below, we give a few examples to illustrate the above results.
\begin{example}\label{ex:first}
    Consider $\mathcal{R}=Z_{125}$. It possesses a maximal ideal generated by $5$, i.e.,$\langle\gamma\rangle=\langle5\rangle$ and has nilpotency index $\rho=3$. Then $\mathcal{C}=\langle25(x-2),5(x-2)^3\rangle$ forms a $2$-constacyclic code of length $5$ over $Z_{125}$, where
    \[
    \begin{aligned}
        f_0(x)&=25(x-2), \quad h_0(x)=(x-2), \quad r_0=2,\quad n_0=1 \\
        f_1(x)&=5(x-2)^3, \quad h_1(x)=(x-2)^3, \quad r_1=1,\quad n_1=3
    \end{aligned}
    \]
    The following set forms a minimal spanning set of $\mathcal{C}$
    \[
    S=\{f_0(x),xf_0(x),f_1(x),xf_1(x)\}
    \]
    where $|S|=4$. Also, $Rank(\mathcal{C})=5-1=4$.
\end{example}

\begin{example}\label{ex:second}
    Consider $\mathcal{R}=Z_{343}$. It possesses a maximal ideal generated by $7$, i.e. $\langle\gamma\rangle=\langle7\rangle$ and has nilpotency index $\rho=3$. Then $\mathcal{C}=\langle49,7(x^6-x^3+4)\rangle$ froms a $5$-constacyclic code of length $12$ over $Z_{343}$, where
    \[
    \begin{aligned}
        f_0(x)&=49, \quad h_0(x)=1, \quad r_0=2,\quad n_0=0 \\
        f_1(x)&=7(x^6-x^3+4), \quad h_1(x)=(x^6-x^3+4), \quad r_1=1,\quad n_1=6
    \end{aligned}
    \]
    The following set forms a minimal spanning set of $\mathcal{C}$
    \[
    \begin{split}
    S=\{
    f_0(x), xf_0(x), x^2f_0(x), x^3f_0(x), x^4f_0(x), x^5f_0(x),\\
    f_1(x), xf_1(x), x^2f_1(x), x^3f_1(x), x^4f_1(x), x^5f_1(x)
    \}
    \end{split}
    \]
    where $|S|=12$. Also, $Rank(\mathcal{C})=12-0=12$.
\end{example}

\begin{example}\label{ex:third}
    Consider $\mathcal{R}=Z_{289}$. It possesses a maximal ideal generated by $17$, i.e. $\langle\gamma\rangle=\langle17\rangle$ and has nilpotency index $\rho=2$. Then $\mathcal{C}=\langle17,(x^6+5x^5+8x^4+6x^3-4x^2-3x+2)\rangle$ forms a $10$-constacyclic code of length $7$ over $Z_{289}$, where
    \[
    \begin{aligned}
        f_0(x)&=17, \quad h_0(x)=1, \quad r_0=1,\quad n_0=0 \\
        f_1(x)&=x^6+5x^5+8x^4+6x^3-4x^2-3x+2,\\ h_1(x)&=x^6+5x^5+8x^4+6x^3-4x^2-3x+2,\quad r_1=0,\quad n_1=6
    \end{aligned}
    \]
    The following set forms a minimal spanning set of $\mathcal{C}$
    \[
    S=\{f_0(x),xf_0(x),x^2f_0(x),x^3f_0(x),x^4f_0(x),x^5f_0(x),f_1(x)\}
    \]
    where $|S|=7$. Also, $Rank(\mathcal{C})=7-0=7$.
\end{example}

\subsection{MHDR and MDS constacyclic codes}\label{sec:MHDR}
In this subsection, we obtain necessary and sufficient conditions under which a $\lambda$-constacyclic code is an MHDR as well as an MDS.

\medskip
Let $\mathcal{C}$ be a $\lambda$-constacyclic code of length $\ell$ over $\mathcal{R}$. For each integer $i$ with $0 \le i \le \rho-1$, $\mathrm{Tor}_i(\mathcal{C})=\{\,\overline{c(x)}\in \mathbb{F}_q[x]/\langle x^{\ell}- \overline{\lambda} \rangle \mid \gamma^{\,i}c(x) \in \mathcal{C} \,\}$ gives the $i$-th torsion code of $\mathcal{C}$. It can be easily observed that $\mathrm{Tor}_i(\mathcal{C})$ forms a $\overline{\lambda}$-constacyclic code of length $\ell$ over the residue field $\mathbb{F}_q$. Moreover, the degree associated with the generator polynomial of $\mathrm{Tor}_i(\mathcal{C})$ is referred to as the $i^{th}$ torsional degree of $\mathcal{C}$.

\begin{lemma}\label{lem:torsion}
    Let $\mathcal{C}= \langle f_0(x),f_1(x), \dots , f_s(x) \rangle $ be a $\lambda$-constacyclic code of length $\ell$ over $\mathcal{R}$, where $f_i(x),~0 \le i \le s$ are generators of $\mathcal{C}$ defined as $f_i(x)=\gamma^{r_i}h_i(x)$. Then  $0 \le i \le s,~\mathrm{Tor}_{r_i}(\mathcal{C})$ is a $\overline{\lambda}$-constacyclic code of length $\ell$ over the residue field $\mathbb{F}_q$ generated by $\overline{h_i(x)}$. Moreover, $n_i$ is the $r_i^{th}$ torsional degree of $\mathcal{C}$ and $dim(\mathrm{Tor}_{r_i}(\mathcal{C}))=\ell-n_i$.
\end{lemma}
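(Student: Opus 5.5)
The plan is to show two inclusions: $\langle \overline{h_i(x)}\rangle \subseteq \mathrm{Tor}_{r_i}(\mathcal{C})$, and that every nonzero element of $\mathrm{Tor}_{r_i}(\mathcal{C})$ has degree at least $n_i$. Together these show that $\overline{h_i(x)}$, which is monic of degree $n_i$, is the monic minimal degree element of $\mathrm{Tor}_{r_i}(\mathcal{C})$. Over the field $\mathbb{F}_q$ that element is the generator polynomial of the $\overline{\lambda}$-constacyclic code $\mathrm{Tor}_{r_i}(\mathcal{C})$. Hence the $r_i^{th}$ torsional degree is $n_i$, and the standard field fact gives $\dim(\mathrm{Tor}_{r_i}(\mathcal{C}))=\ell-n_i$.

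The first inclusion is immediate. Part~\ref{thm:partthree} of the preceding theorem gives $\gamma^{r_i}h_i(x)=f_i(x)\in\mathcal{C}$, so $\overline{h_i(x)}\in \mathrm{Tor}_{r_i}(\mathcal{C})$, and $\mathrm{Tor}_{r_i}(\mathcal{C})$ is an ideal. Since $h_i$ is monic of degree $n_i<\ell$, $\overline{h_i}$ is nonzero of degree exactly $n_i$.

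For the lower bound, suppose $\overline{c(x)}\in \mathrm{Tor}_{r_i}(\mathcal{C})$ is nonzero of degree $d<n_i$. Lift it to $c(x)$ of degree $d$ whose leading coefficient is a unit; this is possible by choosing $\mathbb{T}$-representatives. Then $g(x)=\gamma^{r_i}c(x)\in\mathcal{C}$ has degree $d$ and leading coefficient of $\gamma$-power exactly $r_i$ (note $r_i<\rho$). I will show by induction on $j\le i$ that $g\in\langle f_0,\dots,f_{j}\rangle$ is impossible, or at least that $g$ is forced into a contradiction.

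The cleanest route is as follows. Since $\deg g<n_i$, Lemma~\ref{lem:first} gives $g\in\langle f_0,\dots,f_{i-1}\rangle$. By Theorem~\ref{thm:rank}, every element of this ideal is an $\mathcal{R}$-combination of the $x^kf_j$ with $j\le i-1$; more precisely, the spanning argument there shows that each $x^kf_j$ reduces to that span. So $g=\sum_{j<i}\sum_{k<n_{j+1}-n_j}a_{jk}x^kf_j$. The highest-degree term with nonzero $a_{jk}f_j$ has a distinct degree $n_j+k$, because the degree ranges $[n_j,n_{j+1})$ are disjoint. Its leading coefficient is $a_{jk}\gamma^{r_j}u$ and cannot cancel, so $g$ has leading coefficient in $\langle\gamma^{r_j}\rangle$ with $j\le i-1$.

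The main obstacle is that $r_j>r_i$ only shows that $\gamma^{r_i}$ times a unit lies in $\langle \gamma^{r_{j}}\rangle\subseteq\langle\gamma^{r_i+1}\rangle$. That is already a contradiction, because the leading coefficient of $g$ has $\gamma$-power exactly $r_i$. The care needed is to justify that the spanning representation is unique enough that leading terms do not cancel. The approach is to use the degree staircase: each $x^kf_j$ with $k<n_{j+1}-n_j$ has its own distinct degree, so the top nonzero summand determines the leading term. One must check that a coefficient $a_{jk}$ with $a_{jk}\gamma^{r_j}=0$ is simply discarded, so that only summands which are genuinely nonzero at the top are considered.
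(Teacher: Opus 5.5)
Your plan follows the same route as the paper: show $\langle\overline{h_i}\rangle\subseteq\mathrm{Tor}_{r_i}(\mathcal{C})$, bound degrees from below via Lemma~\ref{lem:first}, and identify the generator. However, the lower-bound step has a genuine gap at the lift.

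Membership $\overline{c}\in\mathrm{Tor}_{r_i}(\mathcal{C})$ only says that \emph{some} $c'$ with $\overline{c'}=\overline{c}$ satisfies $\gamma^{r_i}c'\in\mathcal{C}$. You replace $c'$ by the $\mathbb{T}$-lift $c$ of degree $d$ and assert $\gamma^{r_i}c\in\mathcal{C}$. Two lifts differ by $\gamma t(x)$, and $\gamma^{r_i+1}t(x)$ need not lie in $\mathcal{C}$. So the actual witness $c'$ may carry $\gamma$-adic terms of degree $\ge n_i$, and then Lemma~\ref{lem:first} does not apply to $\gamma^{r_i}c'$.

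This is not a formality. Take the cyclic code $\mathcal{C}=\langle x^3+2x^2+x+3\rangle$ of length $7$ over $\mathbb{Z}_4$, generated by the Hensel lift of $x^3+x+1$, a divisor of $x^7-1$. Then $x^3+x+1\in\mathrm{Tor}_0(\mathcal{C})$. But its $\mathbb{T}$-lift $x^3+x+1=(x^3+2x^2+x+3)+2(x^2+1)$ is not in $\mathcal{C}$, because $x^2+1\notin\mathrm{Tor}_1(\mathcal{C})=\langle x^3+x+1\rangle$.

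By contrast, what you call the main obstacle is not one. Since $f_k=\gamma^{r_k}h_k$, every element of $\langle f_0,\dots,f_{i-1}\rangle$ has all coefficients in $\gamma^{r_{i-1}}\mathcal{R}\subseteq\gamma^{r_i+1}\mathcal{R}$. So no spanning-set or non-cancellation argument from Theorem~\ref{thm:rank} is needed.

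The missing step is to produce a witness of degree $<n_i$, and the monicity of $h_i$ gives this directly.
\begin{itemize}
\item Take the actual witness $c'$ (of degree $<\ell$) and divide in $\mathcal{R}[x]$: $c'=Qh_i+R$ with $\deg R<n_i$.
\item Then $\gamma^{r_i}R=\gamma^{r_i}c'-Qf_i\in\mathcal{C}$.
\item Reducing mod $\gamma$ gives $\overline{c'}=\overline{Q}\,\overline{h_i}+\overline{R}$. Here $\deg\overline{c'}<n_i$ and $\overline{h_i}$ is monic of degree $n_i$, so $\overline{Q}=0$ and $\overline{R}=\overline{c'}\neq 0$.
\item Now $\gamma^{r_i}R$ has degree $<n_i$. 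By Lemma~\ref{lem:first} it lies in $\langle f_0,\dots,f_{i-1}\rangle$ (or it is $0$ if $i=0$), so all its coefficients lie in $\gamma^{r_i+1}\mathcal{R}$.
\item Since $r_i<\rho$, this forces $\overline{R}=0$, a contradiction.
\end{itemize}

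The paper's proof passes over the same point. It bounds $\deg(\gamma^{r_i}g(x))$ for a witness, divides over $\mathbb{F}_q$, and then concludes $\overline{r(x)}=0$ merely from $\deg\overline{r(x)}<n_i$. Doing that division in $\mathcal{R}[x]$ by the monic $h_i$, exactly as above, closes the gap there too.

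With this repair, the rest of your argument is fine: $\overline{h_i}$ is the monic minimal-degree element, so the $r_i^{th}$ torsional degree is $n_i$ and the dimension is $\ell-n_i$.
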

\begin{proof}
    It is easy to observe that $\langle \overline{h_i(x)} \rangle \subseteq \mathrm{Tor}_{r_i}(\mathcal{C})$. Let $\overline{g(x)} \in \mathrm{Tor}_{r_i}(\mathcal{C})$ be a non-zero polynomial. Then $\gamma^{r_i}g(x) \in \mathcal{C}$. If $\deg(\gamma^{r_i}g(x))<n_i$, then by Lemma~\ref{lem:first}, $\gamma^{r_i}g(x) \in \langle f_0(x),f_1(x),\dots,f_{i-1}(x) \rangle$ from which it is observed that $\gamma^{r_i}g(x)= \sum_{k=0}^{i-1}c_k(x)\gamma^{r_k}h_k(x)$, where $c_k(x)$ belongs to the ring $\mathcal{R}[x]/\langle x^{\ell}-\lambda \rangle$ for $0\le k \le i-1$. It implies that $\gamma^{r_i}g(x)= \gamma^{r_i}\sum_{k=0}^{i-1}c_k(x)\gamma^{r_k-r_i}h_k(x)$. Since $r_k>r_i$ for every $0\le k \le i-1$, we obtain $\overline{g(x)}=0 ~\text{in}~ \mathbb{F}_q[x]/\langle x^{\ell}- \overline{\lambda} \rangle$, which is a contradiction. It follows that $\deg(\gamma^{r_i}g(x)) \ge n_i$.
    
    Now $h_i(x)$ be a monic polynomial with $\deg(h_i(x))=n_i$, which consequently implies that $\overline{h_i(x)}$ is a monic polynomial in $\mathrm{Tor}_{r_i}(\mathcal{C})$ with $\deg(\overline{h_i(x)})=n_i$. By division algorithm, there exist polynomials $\overline{q(x)} ~\text{and}~ \overline{r(x)}$ in the ring $\mathbb{F}_q[x]/\langle x^{\ell}- \overline{\lambda} \rangle$ such that $\overline{g(x)}=\overline{q(x)}~\overline{h_i(x)}+\overline{r(x)}$, where $\deg(\overline{r(x)}) < n_i~\text{or}~\overline{r(x)}=0$. Since $\overline{r(x)} \in \mathrm{Tor}_{r_i}(\mathcal{C})$, we must have $\overline{r(x)}=0$. Therefore, $\overline{g(x)}=\overline{q(x)}\overline{h_i(x)}$ which shows that $\overline{g(x)}\in \langle \overline{h_i(x)} \rangle$. Which results in $\mathrm{Tor}_{r_i}(\mathcal{C}) \subseteq \langle \overline{h_i(x)} \rangle$. Hence, $\mathrm{Tor}_{r_i}(\mathcal{C})= \langle \overline{h_i(x)} \rangle$.

    It is easy see that $\mathrm{Tor}_{r_i}(\mathcal{C})$ forms a $\overline{\lambda}$-constacyclic code of length $\ell$ over $\mathbb{F}_q$. Further, $dim(\mathrm{Tor}_{r_i}(\mathcal{C}))=\ell-\deg(\overline{h_i(x)})=\ell-n_i$.
\end{proof}

\begin{lemma}\label{lem:distance}
    Consider a $\lambda$-constacyclic code $\mathcal{C}$ of length $\ell$ over $\mathcal{R}$. Then
    \begin{enumerate}
        \item $d(\mathcal{C})=d(\mathrm{Tor}_{r_0}(\mathcal{C}))$.
        \item $Rank(\mathcal{C})=dim(\mathrm{Tor}_{r_0}(\mathcal{C}))$.
    \end{enumerate}
\end{lemma}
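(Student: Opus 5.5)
Part 2 follows directly from results already established. By Theorem~\ref{thm:rank}, $Rank(\mathcal{C})=\ell-n_0$. By Lemma~\ref{lem:torsion} with $i=0$, $\mathrm{Tor}_{r_0}(\mathcal{C})=\langle \overline{h_0(x)}\rangle$ and $dim(\mathrm{Tor}_{r_0}(\mathcal{C}))=\ell-n_0$. Comparing the two gives the equality.

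For part 1, I will prove $d(\mathcal{C})\le d(\mathrm{Tor}_{r_0}(\mathcal{C}))$ and the reverse inequality separately.

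\emph{Upper bound.} Take a nonzero $\overline{g(x)}\in \mathrm{Tor}_{r_0}(\mathcal{C})$ of minimum weight. Lift it with coefficients in the Teichm\"uller set $\mathbb{T}$, so that each coefficient of $g$ is zero exactly when its image in $\mathbb{F}_q$ is zero. By definition $\gamma^{r_0}g(x)\in\mathcal{C}$. A Teichm\"uller unit times $\gamma^{r_0}$ is nonzero because $r_0<\rho$. Hence $wt(\gamma^{r_0}g)=wt(\overline{g})$, which gives $d(\mathcal{C})\le d(\mathrm{Tor}_{r_0}(\mathcal{C}))$.

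\emph{Lower bound.} Take a nonzero $c(x)\in\mathcal{C}$ and let $t$ be the largest integer with $c\in\gamma^t\mathcal{R}[x]$. Write $c=\gamma^t c'$ with $\overline{c'}\neq 0$. Multiplying by $\gamma^{\rho-1-t}$ gives $\gamma^{\rho-1}c'\in\mathcal{C}$, so $\overline{c'}\in\mathrm{Tor}_{\rho-1}(\mathcal{C})$. Since $\gamma^{\rho-1}a\neq 0$ exactly when $\overline{a}\ne 0$, we get $wt(c)\ge wt(\gamma^{\rho-1}c')=wt(\overline{c'})\ge d(\mathrm{Tor}_{\rho-1}(\mathcal{C}))$. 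So it remains to show $\mathrm{Tor}_{\rho-1}(\mathcal{C})=\mathrm{Tor}_{r_0}(\mathcal{C})$.

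This identification is the key step. For the easy inclusion, multiplying by $\gamma^{\rho-1-r_0}$ shows $\mathrm{Tor}_{r_0}(\mathcal{C})\subseteq \mathrm{Tor}_{\rho-1}(\mathcal{C})$. For the reverse, both are ideals of $\mathbb{F}_q[x]/\langle x^\ell-\overline\lambda\rangle$, so it suffices to show that their generator polynomials have the same degree. A nonzero $\overline{g}\in\mathrm{Tor}_{\rho-1}(\mathcal{C})$ with Teichm\"uller lift $g$ gives $\gamma^{\rho-1}g\in\mathcal{C}$, which has degree $\deg\overline{g}$. Its degree is therefore at least $n_0$, because $n_0$ is the minimum degree in $\mathcal{C}$. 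So the generator of $\mathrm{Tor}_{\rho-1}(\mathcal{C})$ has degree $\ge n_0=\deg\overline{h_0}$. Combined with the inclusion $\langle\overline{h_0}\rangle\subseteq\mathrm{Tor}_{\rho-1}(\mathcal{C})$, which forces that degree to be $\le n_0$, the two ideals coincide. Hence $d(\mathcal{C})\ge d(\mathrm{Tor}_{r_0}(\mathcal{C}))$, completing part 1.

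The main care needed is in the weight comparisons: the lifts must be taken with coefficients in $\mathbb{T}$ (or the zero/nonzero pattern argued directly via $\gamma^{\rho-1}$-multiples), and the minimal-degree property of $f_0$ must be applied to the specific codeword $\gamma^{\rho-1}g$.
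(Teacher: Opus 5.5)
Your upper bound rests on a false step. The definition of $\mathrm{Tor}_{r_0}(\mathcal{C})$ only supplies \emph{some} lift $c$ of $\overline{g}$ with $\gamma^{r_0}c\in\mathcal{C}$. When $r_0<\rho-1$, whether $\gamma^{r_0}c$ lies in $\mathcal{C}$ depends on which lift you take, so you cannot switch to the Teichm\"uller lift.

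For a counterexample, work over $\mathbb{Z}_4$, so $\gamma=2$, $\rho=2$ and $\mathbb{T}=\{0,1\}$. Take $\ell=2$, $\lambda=1$ and
\[
\mathcal{C}=\langle x+3\rangle=\{0,\;x+3,\;2x+2,\;3x+1\}.
\]
Here $n_0=1$, $r_0=0$ and $\overline{x+3}=x+1\in\mathrm{Tor}_0(\mathcal{C})$. Yet the Teichm\"uller lift $x+1$ is not a codeword, so ``by definition $\gamma^{r_0}g(x)\in\mathcal{C}$'' is wrong.

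The inequality $d(\mathcal{C})\le d(\mathrm{Tor}_{r_0}(\mathcal{C}))$ is still true, and the repair is the device you already use in the lower bound. For the lift $c$ given by the definition, $\gamma^{\rho-1}c=\gamma^{\rho-1-r_0}\cdot\gamma^{r_0}c\in\mathcal{C}$. Since $\gamma^{\rho-1}a\neq 0$ exactly when $\overline{a}\neq 0$, this codeword has weight $wt(\overline{g})$. This is also why your Teichm\"uller lift in the lower bound is legitimate: $\gamma^{\rho-1}c$ depends only on $\overline{c}$, so at level $\rho-1$ the choice of lift does not matter. The paper's own upper bound has the mirror-image defect: it keeps an arbitrary lift $f$ and simply asserts $wt(\gamma^{r_0}f)=wt(\overline{f})$.

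With that fix your argument is complete, and part 2 is identical to the paper's. Your lower bound, however, takes a genuinely different route. The paper takes a minimum-weight codeword $g=g_0+\gamma g_1+\cdots+\gamma^{\rho-1}g_{\rho-1}$ and argues $wt(g)\ge wt(\overline{g_0})\ge d(\mathrm{Tor}_{r_0}(\mathcal{C}))$. This breaks down whenever $g\in\gamma\mathcal{R}[x]$, because then $\overline{g_0}=0$. That happens for every codeword when $r_s\ge 1$, e.g.\ in Example~\ref{ex:first}, where all codewords are divisible by $5$.

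You instead remove the exact power $\gamma^t$, pass to $\mathrm{Tor}_{\rho-1}(\mathcal{C})$, and identify $\mathrm{Tor}_{\rho-1}(\mathcal{C})=\mathrm{Tor}_{r_0}(\mathcal{C})$ via the minimality of $n_0$. This handles those codewords correctly. The identification you prove is exactly the chain $\mathrm{Tor}_{r_0}(\mathcal{C})=\cdots=\mathrm{Tor}_{\rho-1}(\mathcal{C})$ that the paper asserts without proof in Theorem~\ref{thm:cardinality}.
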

\begin{proof}
    \begin{enumerate}
        \item We know from Theorem~\ref{lem:torsion} that $\mathrm{Tor}_{r_0}(\mathcal{C})$ forms a $\overline{\lambda}$-constacyclic code of length $\ell$ over $\mathbb{F}_q$ whenever $\mathcal{C}$ is a $\lambda$-constacyclic code of length $\ell$ over $\mathcal{R}$. Let $\overline{f(x)} \in \mathrm{Tor}_{r_0}(\mathcal{C})$ be such that $wt(\mathrm{Tor}_{r_0}(\mathcal{C}))=wt(\overline{f(x)})$. Clearly, $wt(\overline{f(x)})=wt(\gamma^{r_0}f(x))$, where $\gamma^{r_0}f(x) \in \mathcal{C}$. Therefore, $wt(\mathrm{Tor}_{r_0}(\mathcal{C}))=wt(\overline{f(x)})=wt(\gamma^{r_0}f(x))\ge wt(\mathcal{C})$. Conversely, let $g(x)=g_0(x)+\gamma g_1(x)+ \dots +\gamma^{\rho-1}g_{\rho-1}(x) \in \mathcal{C}$ be such that $wt(\mathcal{C})=wt(g(x))$. Now $\gamma^{r_0}g(x) \in \mathcal{C}~\text{implies that}~ \overline{g_0(x)} \in \mathrm{Tor}_{r_0}(\mathcal{C})$. Therefore, $wt(\mathcal{C})=wt(g(x)) \ge wt(\overline{g_0(x)}) \ge wt(\mathrm{Tor}_{r_0}(\mathcal{C}))$. It follows that $wt(\mathcal{C})=wt(\mathrm{Tor}_{r_0}(\mathcal{C}))$ and therefore, $d(\mathcal{C})=d(\mathrm{Tor}_{r_0}(\mathcal{C}))$ because both $\mathcal{C}$ and $\mathrm{Tor}_{r_0}(\mathcal{C})$ are linear codes.
        \item From Theorem~\ref{thm:rank} and Lemma~\ref{lem:torsion}, it is clear that $Rank(\mathcal{C})=\ell-n_0=dim(\mathrm{Tor}_{r_0}(\mathcal{C}))$. Hence, $Rank(\mathcal{C})=dim(\mathrm{Tor}_{r_0}(\mathcal{C}))$.
    \end{enumerate}
\end{proof}

\begin{lemma}\label{lem:cardtor}\cite{mehrdad2012torsion}
    Consider a linear code $\mathcal{C}$ over FCR $\mathcal{R}$. Then $|\mathcal{C}|=\prod_{i=0}^{\rho-1}|\mathrm{Tor}_{i}(\mathcal{C})|$.
\end{lemma}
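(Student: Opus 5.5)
The plan is to compute $|\mathcal{C}|$ by a telescoping argument along the $\gamma$-adic filtration of $\mathcal{C}$. Each layer of the filtration will be identified with one torsion code. Throughout I regard $\mathcal{C}$ as an $\mathcal{R}$-submodule of $\mathcal{R}^{\ell}$ and $\mathrm{Tor}_i(\mathcal{C})=\{\overline{c}\in\mathbb{F}_q^{\ell} \mid \gamma^i c\in\mathcal{C}\}$. This is the vector form of the definition in the paper; the polynomial picture plays no role here.

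For $0\le i\le \rho$ set $\mathcal{C}_i=\mathcal{C}\cap\gamma^i\mathcal{R}^{\ell}$. Then $\mathcal{C}_0=\mathcal{C}$ and $\mathcal{C}_{\rho}=\{0\}$, since $\gamma^{\rho}=0$. Each $\mathcal{C}_i$ is an additive subgroup, and $\mathcal{C}_{i+1}\subseteq\mathcal{C}_i$.

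For $0\le i\le\rho-1$ I will define $\varphi_i:\mathcal{C}_i\to\mathbb{F}_q^{\ell}$ by $\varphi_i(\gamma^i a)=\overline{a}$. The one real check, and the main obstacle, is that $\varphi_i$ is well defined. Suppose $\gamma^i a=\gamma^i a'$. Then $\gamma^i(a-a')=0$, so every coordinate of $a-a'$ lies in the annihilator of $\gamma^i$. By the chain structure of Proposition~\ref{prop:first}, this annihilator is $\langle\gamma^{\rho-i}\rangle$. Since $i\le\rho-1$, we have $\rho-i\ge1$, so $a-a'\in\gamma\mathcal{R}^{\ell}$ and therefore $\overline{a}=\overline{a'}$. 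The fact about the annihilator follows from the unique $\gamma$-adic expansion in Proposition~\ref{prop:first}(b): write $b=\gamma^{j}u$ with $u$ a unit; then $\gamma^i b=0$ forces $i+j\ge\rho$. With well-definedness in hand, $\varphi_i$ is clearly an additive group homomorphism.

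Next I identify the image and the kernel of $\varphi_i$. The image is exactly $\{\overline{a}\mid \gamma^i a\in\mathcal{C}\}=\mathrm{Tor}_i(\mathcal{C})$. For the kernel, $\varphi_i(\gamma^i a)=0$ means $a=\gamma b$ for some $b$, so $\gamma^i a=\gamma^{i+1}b\in\mathcal{C}_{i+1}$. Conversely, if $\gamma^{i+1}b\in\mathcal{C}_{i+1}$, write it as $\gamma^i(\gamma b)$ with $\overline{\gamma b}=0$, so it lies in the kernel. Hence $\ker\varphi_i=\mathcal{C}_{i+1}$, and the first isomorphism theorem gives
\[
|\mathcal{C}_i|=|\mathrm{Tor}_i(\mathcal{C})|\cdot|\mathcal{C}_{i+1}|,\qquad 0\le i\le\rho-1.
\]

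Finally, multiply these $\rho$ equalities. The terms telescope, and using $|\mathcal{C}_{\rho}|=1$ we obtain
\[
|\mathcal{C}|=|\mathcal{C}_0|=\prod_{i=0}^{\rho-1}|\mathrm{Tor}_i(\mathcal{C})|.
\]
Linearity of $\mathcal{C}$ is used only to guarantee that each $\mathcal{C}_i$ is a group and that $\varphi_i$ lands in a well-defined subgroup. The constacyclic property is not needed at all.
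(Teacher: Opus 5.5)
Your proof is correct. The well-definedness of $\varphi_i$ via $\mathrm{Ann}(\gamma^i)=\langle\gamma^{\rho-i}\rangle$ holds. So do the identification of the image with $\mathrm{Tor}_i(\mathcal{C})$, the identification of the kernel with $\mathcal{C}_{i+1}$, and the telescoping product.

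The paper does not prove this lemma at all; it imports it from Mehrdad's workshop paper. Your argument therefore supplies a self-contained proof where the paper has only a citation.

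The more common route in the literature on linear codes over chain rings is different: it goes through a generator matrix in standard form, as in Norton and S\u{a}l\u{a}gean's treatment.
\begin{itemize}
\item One first brings $\mathcal{C}$ to standard form of type $(k_0,\dots,k_{\rho-1})$.
\item One reads off $|\mathcal{C}|=q^{\sum_{j}(\rho-j)k_j}$ and $\dim\mathrm{Tor}_i(\mathcal{C})=k_0+\cdots+k_i$.
\item One checks the identity $\sum_{i=0}^{\rho-1}(k_0+\cdots+k_i)=\sum_{j=0}^{\rho-1}(\rho-j)k_j$.
\end{itemize}

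Your filtration argument avoids the standard-form reduction (Gaussian elimination over a chain ring) entirely. It needs only the annihilator fact and the first isomorphism theorem applied to $0\to\mathcal{C}_{i+1}\to\mathcal{C}_i\to\mathrm{Tor}_i(\mathcal{C})\to 0$. It is also more general than stated: it uses only that $\mathcal{C}$ is an additive subgroup of $\mathcal{R}^{\ell}$, not closure under $\mathcal{R}$-scalars.

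What the standard-form approach buys in exchange is finer structural information: the type itself, the individual torsion dimensions, and the rank $k_0+\cdots+k_{\rho-1}=\dim\mathrm{Tor}_{\rho-1}(\mathcal{C})$. None of that is needed for the cardinality identity.
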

\begin{theorem}\label{thm:cardinality}
    Consider a $\lambda$-constacyclic code $\mathcal{C}= \langle f_0(x),f_1(x), \dots , f_s(x) \rangle $ of length $\ell$ over $\mathcal{R}$, where $f_0(x),f_1(x), \dots ,f_s(x)$ are generators of $\mathcal{C}$ as given in theorem~\ref{thm:ccspanning}. Then $|\mathcal{C}|=|\mathbb{F}_q|^{(\ell\rho-(\ell r_s+n_0(\rho-r_0)+\sum_{i=1}^sn_i(r_{i-1}-r_i)))}$, where $n_i,~ 0 \le i \le s$ are the torsional degrees of $\mathrm{Tor}_{r_i}(\mathcal{C})$.
\end{theorem}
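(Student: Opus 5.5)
The plan is to combine Lemma~\ref{lem:cardtor} with an explicit description of every torsion code $\mathrm{Tor}_j(\mathcal{C})$, $0 \le j \le \rho-1$. Lemma~\ref{lem:torsion} already identifies $\mathrm{Tor}_{r_i}(\mathcal{C})=\langle \overline{h_i(x)}\rangle$, of dimension $\ell-n_i$. So the remaining work is to determine $\mathrm{Tor}_j(\mathcal{C})$ for the indices $j$ that are not of the form $r_i$.

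The first step is monotonicity of the torsion codes. If $\gamma^j c(x)\in\mathcal{C}$, then $\gamma^{j+1}c(x)\in\mathcal{C}$. Hence $\mathrm{Tor}_j(\mathcal{C})\subseteq \mathrm{Tor}_{j+1}(\mathcal{C})$, so the torsional degrees are nonincreasing in $j$.

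Next I would show that $\mathrm{Tor}_j(\mathcal{C})$ is constant on the blocks determined by the $r_i$. Recall $r_0>r_1>\dots>r_s$.
\begin{enumerate}
\item For $j<r_s$, I claim $\mathrm{Tor}_j(\mathcal{C})=0$. If $\gamma^j g(x)\in\mathcal{C}$ with $\overline{g}\neq 0$, write $\gamma^j g=\sum_k c_k\gamma^{r_k}h_k$. Every term on the right lies in $\gamma^{r_s}\mathcal{R}[x]$, while $\gamma^j g\notin \gamma^{j+1}\mathcal{R}[x]$. Comparing $\gamma$-adic valuations of coefficients gives a contradiction.
\item For $r_i\le j<r_{i-1}$ with $1\le i\le s$, I claim $\mathrm{Tor}_j(\mathcal{C})=\langle\overline{h_i}\rangle$. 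The inclusion $\supseteq$ holds because $\gamma^{j}h_i=\gamma^{j-r_i}f_i\in\mathcal{C}$. For $\subseteq$, take $\gamma^j g\in\mathcal{C}$ and reduce modulo $f_i$ exactly as in the proof of Lemma~\ref{lem:torsion}: subtract multiples of $\gamma^{j-r_i}f_i$ to bring the degree below $n_i$. By Lemma~\ref{lem:first} the remainder lies in $\langle f_0,\dots,f_{i-1}\rangle\subseteq\gamma^{r_{i-1}}\mathcal{R}[x]$. Since $j<r_{i-1}$, the remainder has zero image in $\mathrm{Tor}_j$.
\item For $r_0\le j\le\rho-1$, the same argument with $\mathrm{Tor}_{r_0}$ gives $\mathrm{Tor}_j(\mathcal{C})=\langle\overline{h_0}\rangle$. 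Here a nonzero remainder of degree $<n_0$ is directly impossible, since $n_0$ is the minimal degree in $\mathcal{C}$.
\end{enumerate}

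Counting the indices in each block then gives the exponent:
\begin{itemize}
\item $r_s$ indices with $\dim\mathrm{Tor}_j=0$;
\item $r_{i-1}-r_i$ indices with dimension $\ell-n_i$, for $1\le i\le s$;
\item $\rho-r_0$ indices with dimension $\ell-n_0$.
\end{itemize}
By Lemma~\ref{lem:cardtor}, $\log_q|\mathcal{C}| = (\rho-r_0)(\ell-n_0)+\sum_{i=1}^s(r_{i-1}-r_i)(\ell-n_i)$. The $\ell$-terms sum to $\ell(\rho-r_s)$, and rearranging yields exactly the stated exponent $\ell\rho-\bigl(\ell r_s+n_0(\rho-r_0)+\sum_{i=1}^s n_i(r_{i-1}-r_i)\bigr)$.

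The main obstacle is the inclusion $\mathrm{Tor}_j(\mathcal{C})\subseteq\langle\overline{h_i}\rangle$ for $j$ strictly between consecutive $r$'s. The danger is that the remainder after division might still carry information at level $j$. Avoiding this depends on degree reduction in $\mathcal{R}[x]/\langle x^\ell-\lambda\rangle$ behaving well, which works because $h_i$ is monic, together with the valuation bound $r_k\ge r_{i-1}>j$ for all $k<i$. I would write that step carefully and treat the others as routine.
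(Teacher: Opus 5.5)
Your proposal is correct and follows the same route as the paper: identify each torsion code blockwise ($0$ for $j<r_s$, $\langle\overline{h_i(x)}\rangle$ for $r_i\le j<r_{i-1}$, $\langle\overline{h_0(x)}\rangle$ for $r_0\le j\le\rho-1$) and then apply Lemma~\ref{lem:cardtor}. The paper simply asserts the block constancy as clear. Your argument actually justifies it: you divide by the monic $h_i(x)$, apply Lemma~\ref{lem:first} to the remainder, and use the valuation bound $r_k\ge r_{i-1}>j$ for $k<i$.
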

\begin{proof}
    We know that $|\mathrm{Tor}_{i}(\mathcal{C})|=|\mathbb{F}_q|^{\ell-n_i}$, where $n_i$ is the degree associated with the generator polynomial of  $\mathrm{Tor}_{i}(\mathcal{C})$. Clearly $\mathrm{Tor}_{0}(\mathcal{C})=\mathrm{Tor}_{1}(\mathcal{C})= \dots =\mathrm{Tor}_{r_s-1}(\mathcal{C})={0}$. Therefore, $|\mathrm{Tor}_{0}(\mathcal{C})|=|\mathrm{Tor}_{1}(\mathcal{C})|= \dots =|\mathrm{Tor}_{r_s-1}(\mathcal{C})|=1$. Again $\mathrm{Tor}_{r_i}(\mathcal{C})=\mathrm{Tor}_{r_i+1}(\mathcal{C})= \dots =\mathrm{Tor}_{r_{i-1}-1}(\mathcal{C})=\langle \overline{h_i(x)} \rangle$ for $i=1,2,\dots,s$. Therefore, $|\mathrm{Tor}_{r_i}(\mathcal{C})|=|\mathrm{Tor}_{r_i+1}(\mathcal{C})|= \dots =|\mathrm{Tor}_{r_{i-1}-1}(\mathcal{C})|=|\mathbb{F}_q|^{\ell-n_i}$ for $i=1,2,\dots,s$. Also $\mathrm{Tor}_{r_0}(\mathcal{C})=\mathrm{Tor}_{r_0+1}(\mathcal{C})= \dots =\mathrm{Tor}_{\rho-1}(\mathcal{C})=\langle \overline{h_0(x)} \rangle$. Therefore, $|\mathrm{Tor}_{r_0}(\mathcal{C})|=|\mathrm{Tor}_{r_0+1}(\mathcal{C})|= \dots =|\mathrm{Tor}_{\rho-1}(\mathcal{C})|=|\mathbb{F}_q|^{\ell-n_0}$. Therefore, by Lemma~\ref{lem:cardtor}, $|\mathcal{C}|=\prod_{i=0}^{\rho-1}|\mathrm{Tor}_{i}(\mathcal{C})|=|\mathbb{F}_q|^{(\ell\rho-(\ell r_s+n_0(\rho-r_0)+\sum_{i=1}^sn_i(r_{i-1}-r_i)))}$.
\end{proof}

\begin{theorem}\label{thm:mhdr}
    Let $\mathcal{C}$ be a length $\ell$ $\lambda$-constacyclic code over a FCR $\mathcal{R}$. Then $\mathcal{C}$ is MHDR if and only if $\mathrm{Tor}_{r_0}(\mathcal{C})$ is MDS over $\mathbb{F}_q$.
\end{theorem}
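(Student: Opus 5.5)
The plan is to reduce the MHDR condition for $\mathcal{C}$ to the Singleton equality for the linear code $\mathrm{Tor}_{r_0}(\mathcal{C})$ over $\mathbb{F}_q$, using the two identities in Lemma~\ref{lem:distance}. By definition, $\mathcal{C}$ is MHDR exactly when $d(\mathcal{C}) = \ell - Rank(\mathcal{C}) + 1$. Meanwhile $\mathrm{Tor}_{r_0}(\mathcal{C})$ is a linear code of length $\ell$ over the field $\mathbb{F}_q$. Its size is $q^{dim(\mathrm{Tor}_{r_0}(\mathcal{C}))}$, so the MDS condition $|\mathrm{Tor}_{r_0}(\mathcal{C})| = q^{\ell - d+1}$ is equivalent to $d(\mathrm{Tor}_{r_0}(\mathcal{C})) = \ell - dim(\mathrm{Tor}_{r_0}(\mathcal{C})) + 1$.

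First, I will invoke Lemma~\ref{lem:distance}(1) to replace $d(\mathcal{C})$ by $d(\mathrm{Tor}_{r_0}(\mathcal{C}))$. Second, I will invoke Lemma~\ref{lem:distance}(2), which rests on Theorem~\ref{thm:rank} and Lemma~\ref{lem:torsion}, to replace $Rank(\mathcal{C})$ by $dim(\mathrm{Tor}_{r_0}(\mathcal{C})) = \ell - n_0$. Substituting both into the MHDR equation gives
\[
d(\mathrm{Tor}_{r_0}(\mathcal{C})) = \ell - dim(\mathrm{Tor}_{r_0}(\mathcal{C})) + 1,
\]
which is exactly the MDS (Singleton equality) condition for $\mathrm{Tor}_{r_0}(\mathcal{C})$ over $\mathbb{F}_q$. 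Every step is an equivalence, so both directions follow at once.

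The argument is essentially bookkeeping. The only point needing care is to check that MDS over a field, stated in terms of cardinality, coincides with the rank-based equality. This holds because a minimal spanning set of a subspace over a field is a basis, so its rank equals its dimension. The real content sits in Lemma~\ref{lem:distance}, which may be assumed. The trivial case $\mathcal{C}=0$ is excluded implicitly, since $r_0$ and $n_0$ are defined only for nonzero $\mathcal{C}$.
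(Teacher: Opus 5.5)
Your proposal is correct and follows the paper's proof. Both substitute $d(\mathcal{C}) = d(\mathrm{Tor}_{r_0}(\mathcal{C}))$ and $Rank(\mathcal{C}) = \dim(\mathrm{Tor}_{r_0}(\mathcal{C})) = \ell - n_0$ from Lemma~\ref{lem:distance} into the MHDR equality, and then recognize the Singleton equality (the MDS condition) for the torsion code over $\mathbb{F}_q$; the paper writes the two directions separately, whereas you package them as a single chain of equivalences.
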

\begin{proof}
    Suppose $\mathcal{C}$ is an MHDR code. Then $d(\mathcal{C}) = \ell - \mathrm{Rank}(\mathcal{C}) + 1 = n_0 + 1$. By Lemma~\ref{lem:distance}, $d(\mathrm{Tor}_{r_0}(\mathcal{C})) = d(\mathcal{C}) = n_0 + 1$. By Lemma~\ref{lem:torsion}, $\dim(\mathrm{Tor}_{r_0}(\mathcal{C})) = \ell - n_0$. As $\mathrm{Tor}_{r_0}(\mathcal{C})$ forms a $\overline{\lambda}$-constacyclic code over $\mathbb{F}_q$ having $\dim(\mathrm{Tor}_{r_0}(\mathcal{C})) = \ell - n_0$, $|\mathrm{Tor}_{r_0}(\mathcal{C})| = |\mathbb{F}_q|^{\ell - n_0} = |\mathbb{F}_q|^{\ell - d(\mathrm{Tor}_{r_0}(\mathcal{C})) + 1}$. Hence, $\mathrm{Tor}_{r_0}(\mathcal{C})$ is MDS over $\mathbb{F}_q$.

    Conversely, suppose $\mathrm{Tor}_{r_0}(\mathcal{C})$ is MDS code over $\mathbb{F}_q$. Then $|\mathrm{Tor}_{r_0}(\mathcal{C})|=|\mathbb{F}_q|^{(\ell-d(\mathrm{Tor}_{r_0}(\mathcal{C}))+1)}$ which shows that $dim(\mathrm{Tor}_{r_0}(\mathcal{C}))=\ell-d(\mathrm{Tor}_{r_0}(\mathcal{C}))+1$. From Lemma~\ref{lem:distance}, we have  $d(\mathcal{C})=d(\mathrm{Tor}_{r_0}(\mathcal{C}))$ and $Rank(\mathcal{C})=dim(\mathrm{Tor}_{r_0}(\mathcal{C}))$. Therefore, $d(\mathcal{C})=\ell-Rank(\mathcal{C})+1$. Hence, $\mathcal{C}$ is an MHDR code over $\mathcal{R}$.
\end{proof}

\begin{theorem}\label{thm:principal}
    Let $\mathcal{C}$ be a $\lambda$-constacyclic code of length $\ell$ over $\mathcal{R}$. Then the following are equivalent.
    \begin{enumerate}
        \item[(i)] $\mathcal{C}$ is MDS.
        \item[(ii)] $\mathrm{Tor}_{r_0}(\mathcal{C})$ is MDS and $\mathcal{C}$ is generated principally by a monic polynomial.
    \end{enumerate}
\end{theorem}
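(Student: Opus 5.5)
We prove both directions by comparing the cardinality formula of Theorem~\ref{thm:cardinality} with the MDS condition $|\mathcal{C}| = |\mathcal{R}|^{\ell-d(\mathcal{C})+1} = |\mathbb{F}_q|^{\rho(\ell-d(\mathcal{C})+1)}$. We also use Lemma~\ref{lem:distance}, which gives $d(\mathcal{C}) = d(\mathrm{Tor}_{r_0}(\mathcal{C}))$. Throughout, $q^{\ell\rho - E}$ denotes $|\mathcal{C}|$, where $E = \ell r_s + n_0(\rho - r_0) + \sum_{i=1}^s n_i(r_{i-1}-r_i)$.

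\textbf{(ii) $\Rightarrow$ (i).} Suppose $\mathcal{C} = \langle g(x)\rangle$ with $g$ monic. Then the minimal-degree generator can be taken with unit leading coefficient, so $r_0 = 0$. Since $r_0 > r_1 > \dots > r_s \ge 0$, this forces $s = 0$, hence $r_s = 0$. The formula then gives $|\mathcal{C}| = q^{\rho(\ell - n_0)}$. Because $\mathrm{Tor}_{0}(\mathcal{C})$ is MDS of dimension $\ell - n_0$ (Lemma~\ref{lem:torsion}), we have $d(\mathrm{Tor}_0(\mathcal{C})) = n_0 + 1$. Lemma~\ref{lem:distance} then gives $d(\mathcal{C}) = n_0 + 1$. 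Therefore $|\mathcal{C}| = |\mathcal{R}|^{\ell - d(\mathcal{C}) + 1}$, which is the MDS condition.

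\textbf{(i) $\Rightarrow$ (ii).} Suppose $\mathcal{C}$ is MDS and set $d = d(\mathcal{C})$. The Singleton bound $d \le \ell - Rank(\mathcal{C}) + 1 = n_0 + 1$, together with $\rho(\ell - d + 1) = \ell\rho - E$, gives $E = \rho(d-1) \le \rho n_0$.

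\emph{Lower bound on $E$.} Since $n_i \ge n_0$ and $\ell \ge n_0$,
\[
E \ge \ell r_s + n_0(\rho - r_0) + n_0(r_0 - r_s) = \ell r_s + n_0(\rho - r_s) \ge n_0\rho .
\]
Hence all these inequalities are equalities, so $d = n_0 + 1$.

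\emph{Consequences of equality.} Equality in $\ell r_s \ge n_0 r_s$ forces either $r_s = 0$ or $\ell = n_0$. Equality in $n_i(r_{i-1} - r_i) \ge n_0(r_{i-1} - r_i)$ forces $n_i = n_0$ whenever $r_{i-1} > r_i$. The latter always holds, and it contradicts the strict increase $n_i > n_0$ for $i \ge 1$. Thus $s = 0$.

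\emph{The degenerate case.} If $\ell = n_0$, then $\mathcal{C}$ contains no nonzero polynomial of degree $< \ell$, so $\mathcal{C} = 0$. This is either excluded or trivial, and it is the edge case that needs to be handled by convention. Otherwise $r_0 = r_s = 0$, so $\mathcal{C} = \langle f_0\rangle$ with $f_0 = h_0$ monic by part~\ref{thm:partthree}.

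\emph{Torsion code.} Since $d = n_0 + 1$ and $Rank(\mathcal{C}) = \ell - n_0$, the code $\mathcal{C}$ is MHDR. By Theorem~\ref{thm:mhdr}, $\mathrm{Tor}_{r_0}(\mathcal{C})$ is MDS.

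The main obstacle is the rigidity step: showing that equality in the combined bound forces $s = 0$ and $r_0 = 0$. This requires bounding $E$ from below using $n_i > n_0$ in exactly the right grouping, so that each term's equality condition is visible.
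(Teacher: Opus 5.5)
Your (i)$\Rightarrow$(ii) direction is the paper's argument: compare the cardinality formula of Theorem~\ref{thm:cardinality} with $|\mathcal{R}|^{\ell-d(\mathcal{C})+1}$, use $d(\mathcal{C})-1\le n_0$, and exploit $n_i>n_0$ and $r_{i-1}>r_i$. You package it as the single chain $\rho n_0\le \ell r_s+n_0(\rho-r_s)\le E=\rho(d-1)\le \rho n_0$ and read off the equality cases. That is cleaner than the paper's two-stage version, which first extracts $r_s(\ell-n_0)\le 0$ and then drops the factor $n_0-(d(\mathcal{C})-1)$ in its rearranged identity. Finishing via Theorem~\ref{thm:mhdr} is equivalent to the paper's direct computation of $|\mathrm{Tor}_{r_0}(\mathcal{C})|$.

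The gap is in (ii)$\Rightarrow$(i). You claim that from $\mathcal{C}=\langle g(x)\rangle$ with $g$ monic, the minimal-degree generator can be taken with unit leading coefficient, so $r_0=0$. This is false. Take $\mathcal{R}=\mathbb{Z}_4$, $\ell=3$, $\lambda=1$, $g=x^2+1$.
\begin{itemize}
\item We have $xg\equiv x+1$ and $g-x(x+1)=1-x$, so $2=(1+x)+(1-x)\in\mathcal{C}$.
\item Hence $\mathcal{C}=\langle 2,x+1\rangle$ with $f_0=2$, $r_0=1$, $n_0=0$, $s=1$.
\item Here $\mathrm{Tor}_{1}(\mathcal{C})=\mathbb{F}_2[x]/\langle x^3-1\rangle$ is MDS, and $\mathcal{C}$ is principal with a monic generator.
\item Yet $d(\mathcal{C})=1$ and $|\mathcal{C}|=2^5\neq 4^3=|\mathcal{R}|^{\ell-d(\mathcal{C})+1}$.
\end{itemize}
So the step cannot be proved, and under the literal hypothesis (ii)$\Rightarrow$(i) is false.

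The paper's proof silently makes the same identification when it writes $|\mathcal{C}|=|\mathbb{F}_q|^{\rho(\ell-n_0)}$. The hypothesis must therefore be read as ``$\mathcal{C}=\langle f_0(x)\rangle$ with $f_0$ monic'', i.e.\ $r_0=0$, which forces $s=0$. Equivalently, $\mathcal{C}$ is generated by a monic divisor of $x^{\ell}-\lambda$, since such a divisor has minimal degree in the ideal it generates. This is exactly what your (i)$\Rightarrow$(ii) direction produces.

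State that reading explicitly in place of the false claim; the rest of your argument then goes through and coincides with the paper's. The real subtlety is this identification, not the rigidity step you flagged as the main obstacle.
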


\begin{proof}
    Let $\mathcal{C}$ be an MDS code. Then $|\mathcal{C}|=|\mathcal{R}|^{\ell-d(\mathcal{C})+1}$. By Theorem~\ref{thm:cardinality} and Proposition~\ref{prop:first}, $|\mathbb{F}_q|^{(\ell\rho-(\ell r_s+n_0(\rho-r_0)+\sum_{i=1}^sn_i(r_{i-1}-r_i)))}=|\mathbb{F}_q|^{\rho(\ell-d(\mathcal{C})+1)}$. This implies that ${\ell\rho-(\ell r_s+n_0(\rho-r_0)+\sum_{i=1}^sn_i(r_{i-1}-r_i))}=\rho(\ell-d(\mathcal{C})+1)$. Therefore, we have
    \begin{equation}\label{eq:first}
         \ell r_s+n_0(\rho-r_0)+\sum_{i=1}^sn_i(r_{i-1}-r_i)=\rho(d(\mathcal{C})-1)
    \end{equation}
    Now, $n_0(\rho-r_s) \le n_0(\rho-r_0)+\sum_{i=1}^sn_i(r_{i-1}-r_i)$ as $n_0 < n_i~\text{for}~1 \le i \le s$. Therefore, $\ell r_s+ n_0(\rho-r_s)\le \ell r_s+n_0(\rho-r_0)+\sum_{i=1}^sn_i(r_{i-1}-r_i)$. Also, $d(\mathcal{C})-1 \le n_0$. Thus using equation~\ref{eq:first}, we have that
    \begin{align*}
        \ell r_s+  n_0(\rho-r_s) \le \ell r_s+n_0(\rho-r_0)+\sum_{i=1}^sn_i(r_{i-1}-r_i)=\rho(d(\mathcal{C})-1) \le \rho n_0.
        \end{align*}
       This implies that $\ell r_s+  n_0(\rho-r_s) \le \rho n_0$,
      which further implies that $r_s(\ell-n_0)\le 0$.
    
    It follows that $r_s=0$ as $r_s \ge 0~\text{and}~\ell-n_0>0$ .
    Therefore, equation~\ref{eq:first} simplifies to $n_0(\rho-r_0)+\sum_{i=1}^sn_i(r_{i-1}-r_i)=\rho(d(\mathcal{C})-1)$, which can further be written as
    \begin{equation*}
        n_0(\rho-r_0)+\sum_{i=1}^sn_i(r_{i-1}-r_i)=((\rho-r_0)+\sum_{i=1}^s(r_{i-1}-r_i))(d(\mathcal{C})-1).
    \end{equation*}
    It implies that
    \begin{equation*}
         (\rho-r_0)+\sum_{i=1}^s(n_i-(d(\mathcal{C})-1))(r_{i-1}-r_i)=0.
    \end{equation*}
    
    Since $\rho-r_0>0~\text{and}~r_{i-1}-r_i>0~\text{for all}~1 \le i \le s$, we obtain $n_i=d(\mathcal{C})-1$ for all $0\le i \le s$. This together with Lemma~\ref{lem:distance} implies that $n_0=d(\mathcal{C})-1=d(\mathrm{Tor}_{r_0}(\mathcal{C}))-1$. Therefore, $|\mathrm{Tor}_{r_0}(\mathcal{C})|=|\mathbb{F}_q|^{\ell-n_0}=|\mathbb{F}_q|^{\ell-d(\mathrm{Tor}_{r_0}(\mathcal{C}))+1}$. Hence, $\mathrm{Tor}_{r_0}(\mathcal{C})$ is MDS. Now $n_i=d(\mathcal{C})-1$ for all $0\le i \le s$ and $n_0<n_1<\dots <n_s$ can hold simultaneously only when $s$ is zero. Therefore, $r_0=r_s=0$. Consequently, $\mathcal{C}=\langle h_0(x) \rangle$, i.e., $\mathcal{C}$ is generated principally by a monic polynomial. Hence, $(i) \implies(ii)$.

    Conversely, suppose that $\mathrm{Tor}_{r_0}(\mathcal{C})$ is MDS and $\mathcal{C}$ is generated principally by a monic polynomial. Then, by Theorem~\ref{thm:cardinality}, $|\mathcal{C}|=|\mathbb{F}_q|^{\rho(\ell-n_0)}$. Since $\mathrm{Tor}_{r_0}(\mathcal{C})$ is MDS, $n_0=d(\mathrm{Tor}_{r_0}(\mathcal{C}))-1=d(\mathcal{C})-1$. 
    It implies that $|\mathcal{C}|=|\mathbb{F}_q|^{\rho(\ell-(d(\mathcal{C})-1))}=|\mathcal{R}|^{\ell-d(\mathcal{C})+1}$, which follows that $\mathcal{C}$ is MDS. Hence $(ii) \implies(i)$.
    \end{proof}

    The following Corollary is a direct implication of Theorems~\ref{thm:mhdr} and~\ref{thm:principal}.

\begin{corollary}
     Consider a $\lambda$-constacyclic code $\mathcal{C}$ of length $\ell$ over $\mathcal{R}$. Then $\mathcal{C}$ is MDS implies that $\mathcal{C}$ is MHDR.
\end{corollary}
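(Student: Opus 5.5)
The corollary follows by chaining Theorem~\ref{thm:principal} into Theorem~\ref{thm:mhdr}, so the plan is essentially bookkeeping. Assume $\mathcal{C}$ is a $\lambda$-constacyclic code of length $\ell$ over $\mathcal{R}$ that is MDS. The implication (i)$\implies$(ii) of Theorem~\ref{thm:principal} gives two facts: $\mathrm{Tor}_{r_0}(\mathcal{C})$ is MDS over $\mathbb{F}_q$, and $\mathcal{C}$ is principally generated by a monic polynomial. For the present statement we only need the first fact.

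Next we apply the sufficiency direction of Theorem~\ref{thm:mhdr}: if $\mathrm{Tor}_{r_0}(\mathcal{C})$ is MDS over $\mathbb{F}_q$, then $\mathcal{C}$ is MHDR. This concludes the argument.

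As a sanity check, the conclusion can also be unwound directly from Lemma~\ref{lem:distance}. MDS-ness of the torsion code gives $\dim(\mathrm{Tor}_{r_0}(\mathcal{C}))=\ell-d(\mathrm{Tor}_{r_0}(\mathcal{C}))+1$. Lemma~\ref{lem:distance} then translates this into $Rank(\mathcal{C})=\ell-d(\mathcal{C})+1$, which is exactly the MHDR condition.

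No step is a genuine obstacle. The only point needing care is that $r_0$ is read off from the generating set of Theorem~\ref{thm:ccspanning}, so both theorems refer to the same torsion code $\mathrm{Tor}_{r_0}(\mathcal{C})$. In the MDS case $s=0$ and $r_0=0$, so this is the residue code $\mathrm{Tor}_0(\mathcal{C})$.
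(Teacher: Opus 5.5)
Your proposal is correct and is exactly the paper's argument. The paper states this corollary as a direct consequence of Theorem~\ref{thm:principal}, whose direction (i)$\Rightarrow$(ii) gives that $\mathrm{Tor}_{r_0}(\mathcal{C})$ is MDS, combined with the sufficiency direction of Theorem~\ref{thm:mhdr}; your remark that $s=0$ and $r_0=0$ in the MDS case also matches what the paper derives in the proof of Theorem~\ref{thm:principal}, though the implication does not need it.
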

The converse of the above Corollary need not be true. For example consider a $2$-constacyclic code $\mathcal{C}=\langle25(x-2),5(x-2)^3\rangle$ of length $5$ over $Z_{125}$. Clearly, $\mathcal{C}$ is MHDR by Theorem~\ref{thm:mhdr} but not MDS by Theorem~\ref{thm:principal}.

\medskip
Necessary and sufficient conditions for a $\lambda$-constacyclic code of length $p^m$ to be MDS over $\mathbb{F}_q$ have been given in Theorem 3.2 by Dinh et al.~\cite{dinh2020mds}. The following theorem extends this result to a $\lambda$-constacyclic code of arbitrary length by using similar arguments as given by Dinh et al.~\cite{dinh2020mds}.
\begin{theorem}\label{thm:dinhdistance}
    Consider a $\lambda$-constacyclic code $\mathcal{C}_j$ of arbitrary length $\ell=np^m$ with $(n,p)=1$ over $\mathbb{F}_q$ such that it is generated by a polynomial of degree $j$. Then $\mathcal{C}_j$ is an MDS constacyclic code if and only if \item[(i)] $j \in \{ 0, 1, \dots, p-1\} ~\text{when}~m=1$. \item[(ii)] $j \in \{0,1,p^m-1\}~\text{when}~m \ge 2$.
\end{theorem}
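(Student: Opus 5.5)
The plan is to adapt the argument of Dinh et al.~\cite{dinh2020mds} rather than build on the ring-theoretic results above, since the statement concerns codes over $\mathbb{F}_q$ only. I read the statement in the repeated-root setting of~\cite{dinh2020mds}: $\lambda=\lambda_0^{p^m}$, so that $x^{\ell}-\lambda=(x^n-\lambda_0)^{p^m}$ over $\mathbb{F}_q$. Every $\lambda$-constacyclic code $\mathcal{C}_j$ is then generated by a monic divisor $g(x)$ of $x^{\ell}-\lambda$ of degree $j$, so $\dim \mathcal{C}_j=\ell-j$. By the Singleton bound, $\mathcal{C}_j$ is MDS if and only if $d(\mathcal{C}_j)=j+1$. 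The proof therefore has two halves: show $d(\mathcal{C}_j)=j+1$ for the listed values of $j$, and produce a codeword of weight at most $j$ for every other $j$.

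\textbf{Sufficiency.} The trivial cases go first. For $j=0$ we have $\mathcal{C}_0=\mathbb{F}_q^{\ell}$ and $d=1$. For $j=1$ the generator is $x-\alpha$ with $\alpha^{\ell}=\lambda$. A codeword $c(x)$ of weight $1$ would be a monomial $ax^k$ with $a\neq 0$, and it would have to vanish at $\alpha\neq 0$, which is impossible; hence $d\ge 2=j+1$. For $j=p^m-1$ with $n=1$, the code $\langle (x-\lambda_0)^{p^m-1}\rangle$ is one-dimensional. It is spanned by $(x^{p^m}-\lambda_0^{p^m})/(x-\lambda_0)=\sum_i \lambda_0^{p^m-1-i}x^i$, which has full weight $p^m=j+1$.

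For $m=1$ and $2\le j\le p-1$, I would use the Massey--Costello--Justesen (Hasse derivative) argument. Suppose a nonzero $c(x)$ is divisible by $(x-\lambda_0)^j$ and has weight $w\le j$. Then the Hasse derivatives $c^{[t]}(\lambda_0)$ vanish for $t<j$. The binomial coefficients $\binom{e_k}{t}$ with $t<p$, evaluated on the exponents $e_k$ of its support, give a Vandermonde-type system modulo $p$, which forces all coefficients to be zero. The residues of the $e_k$ modulo $p$ must be handled carefully here, and I would follow Dinh's reduction of the exponents modulo $p$.

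\textbf{Necessity.} For $j\notin\{0,1,p^m-1\}$ (respectively $j\ge p$ when $m=1$), I would exhibit a low-weight codeword. Take the smallest $k$ with $p^k\ge j$ (so $k\ge1$) and consider $(x-\lambda_0)^{p^k}=x^{p^k}-\lambda_0^{p^k}$. It is divisible by $g$, lies in $\mathcal{C}_j$, and has weight $2$. This already gives $d\le 2<j+1$ whenever $p^k<\ell$, i.e.\ in the range $2\le j\le p^{m-1}$. For larger $j$, multiply suitable binomials of the form $(x^{p^{a}}-\beta)$ to get products of weight $2^r$ or $(t+1)\cdot 2$ whose degree covers $j$. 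This mirrors the explicit distance formula in~\cite{dinh2020mds}, which I would invoke to conclude $d(\mathcal{C}_j)\le j$ whenever $p^{m-1}<j<p^m-1$ and $m\ge2$.

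\textbf{Main obstacle.} The hardest step is passing from length $p^m$ to arbitrary length $np^m$. When $n>1$, $x^n-\lambda_0$ may split into distinct factors over $\mathbb{F}_q$, and then products of these (for instance Reed--Solomon-type generators) can give MDS codes of other degrees. I would first try to reduce to the case where the relevant factor has the form $(x-\alpha)^{p^m}$, by restricting to codes whose generator is a power of a single linear factor, and applying the substitution $x\mapsto x^n$ to transport codewords and their weights. If this reduction does not cover all divisors $g(x)$, the statement must be read under that restriction, and I would state it explicitly in the proof.
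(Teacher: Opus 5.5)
For $n=1$ (that is, $\ell=p^m$) your plan is the paper's own: the paper's whole proof is an appeal to Theorem~3.2 of \cite{dinh2020mds}, i.e.\ to the known minimum distance of $\langle (x-\lambda_0)^j\rangle$ in length $p^m$. Your sufficiency cases and your weight-$2$ codewords $x^{p^k}-\lambda_0^{p^k}$ reproduce exactly that. The Hasse-derivative argument for $m=1$ works there because the exponents $0,\dots,p-1$ are distinct modulo $p$.

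\textbf{The gap is the passage to $n\ge 2$.} You rightly single this out as the main obstacle, but it cannot be bridged: for $n\ge 2$ the statement is false in both directions. Take $\lambda=1$ and $\ell=np^m$ with $n\ge 2$, $m\ge 1$.
\begin{itemize}
\item \emph{The ``only if'' direction fails.} The repetition code $\langle 1+x+\cdots+x^{\ell-1}\rangle$ is MDS with $j=\ell-1$, which lies in neither list.
\item \emph{The ``if'' direction fails.} For $2\le j\le p^m$, the code $\langle (x-1)^j\rangle$ contains $(x-1)^{p^m}=x^{p^m}-1$. Since $p^m<\ell$, this is a nonzero codeword of weight $2$, so $d\le 2<j+1$. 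This rules out $j=p^m-1$ when $m\ge 2$, and every $2\le j\le p-1$ when $m=1$. For instance, $\langle (x-1)^2\rangle$ of length $10$ over $\mathbb{F}_5$ contains $x^5-1$.
\end{itemize}
This is also exactly where your Hasse-derivative step breaks. With exponents up to $np-1$, two support positions can agree modulo $p$ (here $0$ and $p$), so the columns $\binom{e_k}{t}$, $t<p$, coincide and the Vandermonde-type system degenerates.

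\textbf{Your fallback does not rescue the statement.} Restricting to generators of the form $(x-\alpha)^j$ still leaves the counterexample just given. The substitution $x\mapsto x^n$ also does not transport the relevant codes. It sends $\langle (y-\beta)^j\rangle$ of length $p^m$ to $\langle (x^n-\beta)^j\rangle$ of length $np^m$, whose generator has degree $nj$, not $j$. Moreover, that image code consists of $n$ interleaved copies of the short code, so it has the same minimum distance, at most $j+1$. Hence it is not MDS for $n\ge 2$ and $1\le j<p^m$.

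The only correct version is the case $n=1$, which is precisely Theorem~3.2 of \cite{dinh2020mds}. The paper's one-line proof has the same defect, since the cited theorem says nothing about $n\ge 2$, and Corollary~\ref{cor:nnote} inherits the problem.
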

\begin{proof}
    The proof proceeds in a manner similar to that of Theorem 3.2 of~\cite{dinh2020mds}.
\end{proof}

The following Corollary follows directly from Theorems~\ref{thm:mhdr} and~\ref{thm:dinhdistance}.
\begin{corollary}\label{cor:nnote}
    Let $\mathcal{C}= \langle f_0(x),f_1(x), \dots , f_s(x) \rangle $ be a $\lambda$-constacyclic code of length $\ell=np^m$ with $(n,p)=1$ over $\mathcal{R}$, where $f_0(x),f_1(x), \dots ,f_s(x)$ are generators of $\mathcal{C}$ as given in theorem~\ref{thm:ccspanning}. Then $\mathcal{C}$ is MHDR if and only if \item[(i)] $n_0 \in \{ 0, 1, \dots, p-1\} ~\text{when}~m=1$. \item[(ii)] $n_0 \in \{0,1,p^m-1\}~\text{when}~m \ge 2$.
\end{corollary}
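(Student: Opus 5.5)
The plan is to chain Theorem~\ref{thm:mhdr} with Theorem~\ref{thm:dinhdistance}, applied to the torsion code $\mathrm{Tor}_{r_0}(\mathcal{C})$. By Theorem~\ref{thm:mhdr}, $\mathcal{C}$ is MHDR if and only if $\mathrm{Tor}_{r_0}(\mathcal{C})$ is an MDS code over $\mathbb{F}_q$. So the whole statement reduces to deciding when $\mathrm{Tor}_{r_0}(\mathcal{C})$ is MDS, in terms of the degree of its generator polynomial.

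First I identify that generator. By Lemma~\ref{lem:torsion} with $i=0$, $\mathrm{Tor}_{r_0}(\mathcal{C})=\langle \overline{h_0(x)}\rangle$. It is a $\overline{\lambda}$-constacyclic code of length $\ell=np^m$ over $\mathbb{F}_q$. Since $h_0(x)$ is monic of degree $n_0$ (part~\ref{thm:partthree} of the structure theorem), its generator $\overline{h_0(x)}$ has degree exactly $n_0$; equivalently, its dimension is $\ell-n_0$.

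Next I apply Theorem~\ref{thm:dinhdistance} with $\overline{\lambda}$ in place of $\lambda$ and $j=n_0$. This gives that $\mathrm{Tor}_{r_0}(\mathcal{C})$ is MDS if and only if $n_0\in\{0,1,\dots,p-1\}$ when $m=1$, and $n_0\in\{0,1,p^m-1\}$ when $m\ge 2$. Combining this with the first step yields the stated equivalence.

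The only point needing care is a hypothesis check. Theorem~\ref{thm:dinhdistance} is phrased for a code ``generated by a polynomial of degree $j$'', and that degree must be the degree of the \emph{monic generator} dividing $x^{\ell}-\overline{\lambda}$. I will check that $\overline{h_0(x)}$ is that generator, so that $j$ is well-defined. This holds because $\overline{h_0(x)}$ is monic and is a minimum-degree nonzero element of $\mathrm{Tor}_{r_0}(\mathcal{C})$, as shown in the proof of Lemma~\ref{lem:torsion}. In a field, the minimum-degree monic element of an ideal of $\mathbb{F}_q[x]/\langle x^{\ell}-\overline{\lambda}\rangle$ is the unique monic generator and divides $x^{\ell}-\overline{\lambda}$. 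Apart from this check, the argument is a direct composition of the two earlier theorems, and I expect no further obstacle.
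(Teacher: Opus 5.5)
You follow the paper's route exactly. The paper's proof is the single sentence that the corollary follows from Theorems~\ref{thm:mhdr} and~\ref{thm:dinhdistance}. Your identification of $\overline{h_0(x)}$ as the monic generator of $\mathrm{Tor}_{r_0}(\mathcal{C})$ of degree $n_0$ (Lemma~\ref{lem:torsion}) is the one link the paper leaves implicit, and it is fine.

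The genuine gap is the step you treat as a black box. Theorem~\ref{thm:dinhdistance} is stated for $\ell=np^m$, but its ``proof'' only defers to Dinh et al., whose Theorem 3.2 concerns length $p^m$. For $n>1$ Theorem~\ref{thm:dinhdistance} is false, and so is the corollary. Your argument, like the paper's, establishes the statement only when $n=1$.

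Here is a concrete counterexample. Take $\mathcal{R}=\mathbb{Z}_9$ (so $p=3$), $\lambda=1$, $\ell=6=2\cdot 3$ (so $n=2$, $m=1$), and $\mathcal{C}=\langle x^2-1\rangle$.
\begin{itemize}
\item Since $x^2-1$ is monic and $x^6-1=(x^2-1)(x^4+x^2+1)$, every nonzero codeword is $a(x)(x^2-1)$ with $a(x)\neq 0$ of degree at most $3$.
\item Hence $f_0=x^2-1$, $r_0=0$, $n_0=2$, $s=0$, and $\mathrm{Rank}(\mathcal{C})=4$.
\item Condition (i) holds, since $2\in\{0,1,2\}$.
\item Yet the codeword $x^2-1=(-1,0,1,0,0,0)$ has weight $2<3=\ell-\mathrm{Rank}(\mathcal{C})+1$, so $\mathcal{C}$ is not MHDR.
\end{itemize}
Equivalently, $\langle x^2-1\rangle$ over $\mathbb{F}_3$ is a $[6,4,2]$ code, which is not MDS even though its generator has degree $2\le p-1$.

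The converse direction fails as well. Over $\mathbb{Z}_9$, $\langle 1+x+\dots+x^5\rangle$ is the repetition code. It has $n_0=5$, rank $1$ and distance $6$, so it is MHDR, while $5\notin\{0,1,2\}$.

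This second failure is systematic. Over $\mathbb{F}_q$, a constacyclic code whose monic generator has degree $\ell-1$ is generated by $\sum_{k=0}^{\ell-1}\alpha^k x^{\ell-1-k}$ for some root $\alpha$ of $x^{\ell}-\overline{\lambda}$. That generator has full weight, so the code is always MDS. But $\ell-1=np^m-1$ is not in the claimed list once $n>1$.

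To obtain a true statement you must either restrict to $\ell=p^m$, where your chain of implications is valid, or actually determine when the repeated-root code $\langle\overline{h_0(x)}\rangle$ of length $np^m$ is MDS.
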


Some examples are provided below to support our results.
\begin{example}\label{ex:fourth}
    Suppose $\mathcal{R}=F_5+\gamma F_5$ having nilpotency index $\rho=2$. Consider a $2$-constacyclic code $\mathcal{C}=\langle \gamma (x-2)^2,(x-2)^4 \rangle$ of length $5$ over $\mathcal{R}$. By Theorem~\ref{thm:rank} $Rank(\mathcal{C})=3$ and $d(\mathcal{C})=3$. Also, $\mathrm{Tor}_{r_0}(\mathcal{C})= \langle (x-2)^2 \rangle$ forms a $2$-constacyclic code of length $5$ over $F_5$ with $dim(\mathrm{Tor}_{r_0}(\mathcal{C}))=3$ and $d(\mathrm{Tor}_{r_0}(\mathcal{C}))=3$. Clearly $|\mathrm{Tor}_{r_0}(\mathcal{C})|=5^3$ and $|\mathbb{F}_q|^{(\ell-d(\mathrm{Tor}_{r_0}(\mathcal{C}))+1)}=5^3$. Therefore, $\mathrm{Tor}_{r_0}(\mathcal{C})$ is MDS code. Hence, by Theorem~\ref{thm:mhdr}, $\mathcal{C}$ is MHDR over $\mathcal{R}$.
\end{example}

\begin{example}\label{ex:fifth}
   Over $Z_9$ having nilpotency index $\rho=2$, Consider an $8$-constacyclic code $\mathcal{C}=\langle 3(x^2+1)\rangle$ of length $4$. By Theorem~\ref{thm:rank}, $\mathrm{Tor}_{r_0}(\mathcal{C})=\langle x^2+1\rangle$ forms an $8$-constacyclic code of length $4$ over $\mathbb{F}_3$ with $\dim(\mathrm{Tor}_{r_0}(\mathcal{C}))=2$ and $d(\mathrm{Tor}_{r_0}(\mathcal{C}))=2$. Clearly $|\mathrm{Tor}_{r_0}(\mathcal{C})|=3^2$ and $|\mathbb{F}_3|^{(\ell-d(\mathrm{Tor}_{r_0}(\mathcal{C}))+1)}=3^3$. By Corollary~\ref{cor:nnote}, $\mathrm{Tor}_{r_0}(\mathcal{C})$ fails to be an MDS code. Hence, by Theorem~\ref{thm:mhdr}, $\mathcal{C}$ cannot be an MHDR code over $\mathbb{Z}_9$.

\end{example}
\begin{example}
     Consider $\mathcal{R}=Z_{289}$. The maximal ideal of $\mathcal{R}$ is $\langle\gamma\rangle=\langle17\rangle$ and nilpotency index $\rho=2$. Then, consider a $10$-constacyclic code $\mathcal{C}=\langle x^6+5x^5+8x^4+6x^3-4x^2-3x+2\rangle$ of length $7$ over $Z_{289}$. Clearly $\mathrm{Tor}_{r_0}(\mathcal{C})=\langle x^6+5x^5+8x^4+6x^3-4x^2-3x+2\rangle$. By Corollary~\ref{cor:nnote}, $\mathrm{Tor}_{r_0}(\mathcal{C})$ is MDS. Also, $\mathcal{C}$ is generated principally by a monic polynomial. By Theorem~\ref{thm:principal}, $\mathcal{C}$ is also MDS.
\end{example}
It can be seen by Theorem~\ref{thm:mhdr} that the constacyclic codes given in Examples~\ref{ex:first},~\ref{ex:second} and~\ref{ex:third} of subsection~\ref{subsec:one} are all MHDR.

\section{Conclusion}\label{sec:conc}
In the above study, the generators of the constacyclic code $\mathcal{C}$ of arbitrary length $\ell$ over a FCR $\mathcal{R}$ were explicitly determined. Using these generators, a minimal spanning set was obtained along with the rank of the code $\mathcal{C}$. Moreover, we also established necessary and sufficient conditions under which a constacyclic code of arbitrary length becomes MHDR and MDS. Some examples are also provided to support our results.

\printbibliography

@article{dalal2024mds,
  title={MDS and MHDR cyclic codes over finite chain rings},
  author={Dalal, Monika and Dutt, Sucheta and Sehmi, Ranjeet},
  journal={Journal of Mathematics},
  volume={2024},
  number={1},
  pages={4540992},
  year={2024},
  publisher={Wiley Online Library}
}

@article{monika2021cyclic,
  title={On cyclic codes over finite chain rings},
  author={Monika and Dutt, Sucheta and Sehmi, Ranjeet},
  journal={Journal of Physics: Conference Series},
  volume={1850},
  number={1},
  pages={012010},
  year={2021},
  organization={IOP Publishing}
}

@article{dinh2017repeated,
  title={Repeated-root constacyclic codes of prime power lengths over finite chain rings},
  author={Dinh, Hai Q and Nguyen, Hien DT and Sriboonchitta, Songsak and Vo, Thang M},
  journal={Finite Fields Appl.,},
  volume={43},
  pages={22--41},
  year={2017},
  publisher={Elsevier}
}

@article{castagnoli1991repeated,
  title={On repeated-root cyclic codes},
  author={Castagnoli, Guy and Massey, James L and Schoeller, Philipp A and Von Seemann, Niklaus},
  journal={IEEE Trans. Inform. Theory},
  volume={37},
  number={2},
  pages={337--342},
  year={1991},
  publisher={IEEE}
}

@article{van1991repeated,
  title={Repeated-root cyclic codes},
  author={van Lint, Jacobus H},
  journal={IEEE Trans. Inform. Theory},
  volume={37},
  number={2},
  pages={343--345},
  year={1991},
  publisher={IEEE}
}

@article{calderbank1994z4,
  title={The $Z_4$-linearity of Kerdock, Preparata, Goethals and related codes},
  author={Calderbank, AR and Hammons Jr, AR and Kumar, P Vijay and Sloane, NJA and Sol{\'e}, P},
  journal={IEEE Trans. Inform. Theory},
  volume={40},
  number={2},
  pages={301--319},
  year={1994}
}

@article{norton2000structure,
  title={On the structure of linear and cyclic codes over a finite chain ring},
  author={Norton, Graham H and S{\u{a}}l{\u{a}}gean, Ana},
  journal={Appl. Algebra Engrg. Comm. Comput.,},
  volume={10},
  number={6},
  pages={489--506},
  year={2000},
  publisher={Springer}
}

@article{norton2002hamming,
  title={On the Hamming distance of linear codes over a finite chain ring},
  author={Norton, Graham H and Salagean, Ana},
  journal={IEEE Trans. Inform. Theory},
  volume={46},
  number={3},
  pages={1060--1067},
  year={2002},
  publisher={IEEE}
}

@article{dinh2004cyclic,
  title={Cyclic and negacyclic codes over finite chain rings},
  author={Dinh, Hai Quang and L{\'o}pez-Permouth, Sergio R},
  journal={IEEE Trans. Inform. Theory},
  volume={50},
  number={8},
  pages={1728--1744},
  year={2004},
  publisher={IEEE}
}

@article{dinh2005negacyclic,
  title={Negacyclic codes of length $2^s$ over galois rings},
  author={Dinh, Hai Q},
  journal={IEEE Trans. Inform. Theory},
  volume={51},
  number={12},
  pages={4252--4262},
  year={2005},
  publisher={IEEE}
}

@article{dinh2009constacyclic,
  title={Constacyclic Codes of Length $2^{s} $ Over Galois Extension Rings of $F_2 + uF_2$},
  author={Dinh, Hai Q},
  journal={IEEE Trans. Inform. Theory},
  volume={55},
  number={4},
  pages={1730--1740},
  year={2009},
  publisher={IEEE}
}

@article{sualuagean2006repeated,
  title={Repeated-root cyclic and negacyclic codes over a finite chain ring},
  author={S{\u{a}}l{\u{a}}gean, Ana},
  journal={Discrete Appl. Math.,},
  volume={154},
  number={2},
  pages={413--419},
  year={2006},
  publisher={Elsevier}
}

@article{dinh2008linear,
  title={On the linear ordering of some classes of negacyclic and cyclic codes and their distance distributions},
  author={Dinh, Hai Q},
  journal={Finite Fields Appl.,},
  volume={14},
  number={1},
  pages={22--40},
  year={2008},
  publisher={Elsevier}
}

@article{dinh2010constacyclic,
  title={Constacyclic codes of length $p^s$ over $F_{p^m}+ uF_{p^m}$},
  author={Dinh, Hai Q},
  journal={Journal of Algebra},
  volume={324},
  number={5},
  pages={940--950},
  year={2010},
  publisher={Elsevier}
}

@article{cao2018constacyclic,
  title={Constacyclic codes of length $np^s$ over $F_{p^m}+ uF_{p^m}$},
  author={Cao, Yonglin and Cao, Yuan and Dinh, Hai Q and Fu, Fang-Wei and Gao, Jian and Sriboonchitta, Songsak},
  journal={Adv. Math. Commun.,},
  volume={12},
  number={2},
  pages={231--262},
  year={2018}
}

@article{raka2015class,
  title={A class of constacyclic codes over a finite field-II},
  author={Raka, Madhu},
  journal={Indian J. Pure Appl. Math.,},
  volume={46},
  number={6},
  pages={809--825},
  year={2015},
  publisher={Springer}
}

@article{chen2012constacyclic,
  title={Constacyclic codes over finite fields},
  author={Chen, Bocong and Fan, Yun and Lin, Liren and Liu, Hongwei},
  journal={Finite Fields Appl.,},
  volume={18},
  number={6},
  pages={1217--1231},
  year={2012},
  publisher={Elsevier}
}

@article{dinh2012repeated,
  title={Repeated-root constacyclic codes of length $2p^s$},
  author={Dinh, Hai Q},
  journal={Finite Fields Appl.,},
  volume={18},
  number={1},
  pages={133--143},
  year={2012},
  publisher={Elsevier}
}

@article{cao2013constacyclic,
  title={On constacyclic codes over finite chain rings},
  author={Cao, Yonglin},
  journal={Finite Fields Appl.,},
  volume={24},
  pages={124--135},
  year={2013},
  publisher={Elsevier}
}

@article{dinh2013structure,
  title={Structure of repeated-root constacyclic codes of length $3p^s$ and their duals},
  author={Dinh, Hai Q},
  journal={Discrete Mathematics},
  volume={313},
  number={9},
  pages={983--991},
  year={2013},
  publisher={Elsevier}
}

@article{chen2014repeated,
  title={Repeated-root constacyclic codes of length $lp^s$ and their duals},
  author={Chen, Bocong and Dinh, Hai Q and Liu, Hongwei},
  journal={Discrete Appl. Math.,},
  volume={177},
  pages={60--70},
  year={2014},
  publisher={Elsevier}
}

@article{sharma2018structure,
  title={On the structure and distances of repeated-root constacyclic codes of prime power lengths over finite commutative chain rings},
  author={Sharma, Anuradha and Sidana, Tania},
  journal={IEEE Trans. Inform. Theory},
  volume={65},
  number={2},
  pages={1072--1084},
  year={2018},
  publisher={IEEE}
}

@article{chen2016constacyclic,
  title={Constacyclic codes of length $2p^s$ over $F_{p^m}+ uF_{p^m}$},
  author={Chen, Bocong and Dinh, Hai Q and Liu, Hongwei and Wang, Liqi},
  journal={Finite Fields Appl.,},
  volume={37},
  pages={108--130},
  year={2016},
  publisher={Elsevier}
}

@book{mcdonald1974finite,
  title={Finite rings with identity},
  author={McDonald, Bernard R},
  publisher= {Marcel Dekker},
  location={New York},
  year={1974}
}

@inproceedings{mehrdad2012torsion,
  author    = {Mehrdad, M},
  title     = {Torsion Codes Over a Finite Chain Rings},
  booktitle = {Second Workshop on Algebra and its Applications},
  year      = {2012}
}

@article{dinh2020mds,
  title={MDS constacyclic codes of prime power lengths over finite fields and construction of quantum MDS codes},
  author={Dinh, Hai Q and ElDin, Ramy Taki and Nguyen, Bac T and Tansuchat, Roengchai},
  journal={Internat. J. Theoret. Phys.,},
  volume={59},
  number={10},
  pages={3043--3078},
  year={2020},
  publisher={Springer}
}

\end{document}